
\documentclass[11pt,b5paper,twoside, headrule]{amsart}

\usepackage{amsfonts, amsmath, amssymb,latexsym}
\usepackage{epsfig}
\usepackage[curve]{xy}
\usepackage{algorithmic}
\usepackage{algorithm}
\usepackage{enumerate}
\usepackage{framed}
\usepackage{hyperref}


\headsep=1truecm \headheight=0pt \topmargin=0pt \oddsidemargin=40pt
\evensidemargin=25pt \textwidth=13.5truecm \textheight=19.5truecm

\footskip=10mm\parskip 0.2cm\addtocounter{page}{0}
\setlength{\arraycolsep}{1pt}

\newtheorem{thm}{Theorem}[section]
\newtheorem{cor}[thm]{Corollary}
\newtheorem{lem}[thm]{Lemma}
\newtheorem{prop}[thm]{Proposition}
\theoremstyle{definition}
\newtheorem{defn}[thm]{Definition}

\newtheorem{example}[thm]{Example}
\newtheorem{notation}[thm]{Notation}
\theoremstyle{remark}
\newtheorem{rem}[thm]{Remark}
\numberwithin{equation}{section}

\newcommand{\Cc}{\mathcal{C}}

\newcommand{\R}{{\mathbb R}}
\newcommand{\C}{{\mathbb C}}

\newcommand{\SL}{\operatorname{SL}}
\newcommand{\GL}{\operatorname{GL}}
\newcommand{\Id}{\operatorname{Id}}
\newcommand{\ext}{\operatorname{ext}}
\newcommand{\inte}{\operatorname{int}}
\newcommand{\Tr}{\operatorname{Tr}}

\title{Fuchsian Codes with Arbitrarily High Code Rates}

\begin{document}
\renewcommand\baselinestretch{1.2}
\renewcommand{\arraystretch}{1}
\def\base{\baselineskip}
\font\tenhtxt=eufm10 scaled \magstep0 \font\tenBbb=msbm10 scaled
\magstep0 \font\tenrm=cmr10 scaled \magstep0 \font\tenbf=cmb10
scaled \magstep0


\def\evenhead{{\protect\centerline{\textsl{\large{I. Blanco}}}\hfill}}

\def\oddhead{{\protect\centerline{\textsl{\large{On the non vanishing of the cyclotomic $p$-adic $L$-functions}}}\hfill}}

\pagestyle{myheadings} \markboth{\evenhead}{\oddhead}

\thispagestyle{empty}

\author[I. Blanco-Chac\'{o}n]{Iv\'{a}n Blanco-Chac\'{o}n}
\address{Department of Mathematics and Systems Analysis\\
Aalto University\\
Otakaari 1, M\\
FI-00076 Espoo, Finland}
\email{ivan.blancochacon@aalto.fi}

\author[C. Hollanti]{Camilla Hollanti}
\address{Department of Mathematics and Systems Analysis\\
Aalto University\\
Otakaari 1, M\\
FI-00076 Espoo, Finland}
\email{camilla.hollanti@aalto.fi}

\author[M. Alsina]{Montserrat Alsina}
\address{Universitat Polit\`{e}cnica de Catalunya- BarcelonaTech, Dept. Applied Mathematics III - EPSEM,\\
University of Barcelona\\
Av. Bases de Manresa 61-73\\
08242  Manresa (Spain)}
\email{montserrat.alsina@upc.edu}

\author[D. Rem\'{o}n]{Dion\'{i}s Rem\'{o}n}
\address{Faculty of Mathematics\\
University of Barcelona\\
Gran Via de les Corts Catalanes, 585\\
08007 Barcelona, Spain}
\email{dremon@ub.edu}

\maketitle

\begin{abstract} Recently, so-called Fuchsian codes have been proposed in [I. Blanco-Chac\'on et al., ``Nonuniform Fuchsian codes for
noisy channels'', J. of the Franklin Institute 2014]  for communication over channels subject to additive white Gaussian noise (AWGN). The two main advantages of Fuchsian codes are their ability to compress information, i.e., high code rate, and their logarithmic decoding complexity.  In this paper, we improve the first property further by constructing Fuchsian codes with arbitrarily high code rates while maintaining logarithmic decoding complexity. Namely, in the case of Fuchsian groups derived from quaternion algebras over totally real fields we obtain a code rate that is proportional to the degree of the base field. In particular, we consider  arithmetic Fuchsian groups of signature $(1;e)$ to  construct explicit codes having code rate six, meaning that we can transmit six independent integers during one channel use.
\end{abstract}

\section{Introduction}

Nonuniform codes are known to be good in terms of approaching the capacity of a channel affected by additive white Gaussian noise (AWGN). They have been used already in early-state signal transmission, \emph{e.g.}, in the so-called  \emph{codec} transmission, and are present more recently in the Digital Video Broadcasting Next Generation Handheld (DVB-NGH) standard \cite{dvb}. Unfortunately, nonuniform codes are in general subject to brute-force maximum-likelihood (ML) decoding methods, resulting in a linear decoding complexity in the codebook size. Recently, in \cite{BHAR} and \cite{BHR}, a family of nonlinear and nonuniform \emph{Fuchsian codes} were constructed based on Fuchsian groups of the first kind defined from quaternion algebras over $\mathbb{Q}$.
The decoding procedure of these codes is based on a modified point reduction algorithm having logarithmic complexity in the codebook size, which was shown  to imply logarithmic decoding complexity for the Fuchsian codes \cite{BHAR}, \cite{bayerremon}.

In this paper, we will study one of the key features of a \emph{code}, namely its ability to carry information. In other words, how many bits per channel use we will be able to send when using a given code. Here, we will do this in conjunction with Fuchsian codes. To this end, let us start with the following intuitive definition, which will be formalized later (cf. Def. \ref{defncoderate}).

\begin{defn}
Let $C\subset\mathbb{C}$ denote a codebook of size $|C|<\infty$, and let $k$ denote the number of independent integers embedded in each codeword. The \emph{code rate} is
$$
R=k
$$
 (independent integer) symbols per channel use (spcu).

The \emph{data rate} is
$$
R_d=\log_2|C|
$$
 bits per channel use (bpcu).

\end{defn}

The code rate can be thought of as the ability of the code to compress information, and the data rate as the transmission speed enabled by the code.

\begin{rem}
When encoding over multiple channel uses, as is the case for lattice codes and space--time codes \cite{OV}, the suitable definition for code rate is $R=k/T$, where $T$ is the number of channel uses. With $T=1$, this coincides with the definition above.
\end{rem}

Apart from the sub-linear decoding complexity, another advantage of the Fuchsian codes in \cite{BHAR} is that they allow, in the sense of the above definition, to compress information. Namely, they enable us to embed three independent integers in one complex number to be transmitted,  having thus  code rate $R=3$. In comparison to the usual way of transmitting a complex signal, \emph{e.g.}, by using the quadrature amplitude modulation (QAM) consisting of a finite subset of Gaussian integers
$$
\mathbb{Z}[i]=\{c=a+bi\, |\, a,b\in\mathbb{Z}\}
$$
embedding only two independent integers, the rate of a Fuchsian code \cite{BHAR} is 50\% higher. The algebraic reason for the higher rate will become evident in Section 3.

In what follows, we will consider quaternion algebras defined over totally real field extensions of $\mathbb{Q}$ of degree bigger than $1$. As we will see, this implies that we can increase the code rate even further. Since the point reduction algorithm works in general for arithmetic Fuchsian groups,
we have adapted it to some explicit groups derived from the arithmetic Fuchsian groups of signature $(1;e)$. These groups allow us to construct Fuchsian codes with higher rates. We will show explicit examples of a rate six code, and our method can indeed produce fully explicit codes of rate up to $18$.

The motivation for increasing the code rate initially came for lattice coding \cite{OV}. For lattice codes, higher code rate typically implies higher data rates (or equivalently, bigger codebook, cf. Def. 1.1.) without having to increase the transmission power or to compromise the minimum distance. As the transmission power is determined by the Euclidean norm of the transmitted codeword, this results from the fact that a higher rank lattice with a unit volume has more points within a Euclidean hypercube of a given edge length than a lower rank lattice with a unit volume. To get an intuition, one can think of how many integer points are there in the real line between, say, 0 and 10, versus how many integral  points are there in a $10\times 10\times 10$ cube in $\mathbb{R}^3$. The same is naturally valid for Euclidean hyperspheres. Therefore, it is desirable to maximize the code rate. For the proposed higher rate Fuchsian codes there is a caveat: due to the nonlinear and nonuniform structure, there is no \emph{a priori} reason why higher rate should imply a bigger codebook. It seems difficult to give a rigorous proof for this, so we have settled with numerical experiments to see how the rate affects the codebook size. In our example cases, higher rate seems to indeed imply a bigger codebook, given the minimum distance and the hypersphere radius.

Our interest in Fuchsian groups as a basis for code construction stems from a series of recent papers by Palazzo \emph{et al.} In \cite{brazilian_IEEE,brazilian_COAM,brazilian_franklin2,brazilian_franklin}, among others, various interesting connections between Fuchsian groups and signal constellation design are presented.  In \cite{brazilian_IEEE}, the authors construct Fuchsian groups suitable for signal constellation design.  In  \cite{brazilian_franklin2}, the authors consider the unit disk model  of the hyperbolic plane as the signal space, and the noise is modeled as a hyperbolic Gaussian random variable. By using some results of hyperbolic geometry they construct a hyperbolic equivalent to  QAM and PSK constellations and point out that, when the channel model is hyperbolic (this is the case \emph{e.g.} in power transmission line communications \cite{59paper}), the proposed hyperbolic constellations provide higher coding gains than the classical Euclidean variants. Building on this work, in \cite{brazilian_franklin} the authors construct dense tessellations and count Dirichlet domains attached to certain families of these tessellations.  In \cite{brazilian_COAM} the authors use units of quaternion orders to construct space-time matrices with the potential use case being wireless multi-antenna (MIMO) communications. We refer the reader to \cite{SRS,maxorder} as the early references to the use of division algebras and maximal orders in MIMO.

Although codes related to Fuchsian groups had been considered before, our approach in \cite{BHAR} was original in that it described a complete construction and decoding process, whereas earlier work had largely  concentrated on the constellation design while giving little attention to the decoding and performance aspects.
Another key difference to the aforementioned works was, as is the case of the present paper,  that we are studying codes on the \emph{complex plane} arising from quaternion algebras and Fuchsian groups, and  our aim is to apply the codes to the classical  (Euclidean) channel models such as the aforementioned AWGN channel, with possible future extension to fading  channels. We do not use hyperbolic metric as our design metric, but use the Fuchsian group as a starting point to the  code generation. Nevertheless, our decoder will rely on hyperbolic geometry as opposed to the classical decoders based on Euclidean geometry.

The paper is organized as follows: in Section 2 we  give some background and notation on Fuchsian groups acting on the complex upper half-plane, specially those coming from quaternion algebras over a number field $F$, in a more general setting than \cite{BHR} and \cite{BHAR}.
In Section 3,  we generalize the construction of Fuchsian codes in order to obtain codes of arbitrarily high rates. In particular, by using quaternion algebras over totally real extensions $F/\mathbb{Q}$, we prove that the code rate is at least $3n$, where $n=[F:\mathbb{Q}]$ is the degree of the base field. In Section $4$, we explore such Fuchsian codes for a number of Fuchsian groups derived from those of signature $(1;e)$, classified by Takeuchi \cite{tak}. This task has required to explicitly construct suitable fundamental domains for these groups and to adapt the point reduction algorithm \cite{bayerremon} to our case. We also provide some numerical evidence to justify the study of higher rate Fuchsian codes by showing that increasing the rate may indeed increase the codebook size and hence the data rate.
In the last section, we expose the conclusions and discuss directions for future research.

\section{Action of arithmetic Fuchsian groups on the hyperbolic plane}

Next we review some algebraic concepts and results related to Fuchsian groups acting on the complex upper half-plane, in particular those arising from quaternion algebras over a number field $F$, thus extending our work in \cite{BHAR}.

\subsection{Tessellations on the complex plane}

Let us denote by $\mathrm{SL}(2,\mathbb{R})$ the special linear group of  $2\times 2$-matrices with entries in $\mathbb{R}$ and determinant equal to 1. There is a group action on the Riemann sphere $\mathbb{C}\cup \{\infty\}$ defined by:

\begin{equation}
\gamma(z)=\dfrac{a z + b}{c z + d}, \quad \gamma(\infty)=\dfrac{a}{c}=\lim_{z\rightarrow \infty} \gamma(z), \quad \forall \gamma=\left( \begin{array}{cr} a & \quad b \\ c & d \end{array}\right)\in \mathrm{SL}(2,\mathbb{R}), \ z\in \mathbb{C}.
\end{equation}

These maps $z\mapsto \gamma(z)$ are called \emph{fractional linear transformations} or \emph{M\"{o}bius transformations} of the Riemann sphere. The unique matrices fixing all the points are $\Id$ and $-\Id$. Thus,  $\mathrm{SL}(2,\mathbb{R})/\{\pm \Id\}$ acts faithfully on $\mathbb{C}$, that is to say, each element other than the identity acts nontrivially.
The complex upper half-plane $$\mathcal{H}=\{z\in \mathbb{C} \mid \Im (z)>0\}$$ is stable under this action. In fact the M\"{o}bius transformations are the group of isometries of $\mathcal{H}$ with respect to the hyperbolic geometry.

We will consider discrete subgroups of $\mathrm{SL}(2, \mathbb{R})$ with a proper and discontinuous action on $\mathcal{H}$ such that the hyperbolic volume of the quotient of $\mathcal{H}$ by that action is finite. These groups are called Fuchsian groups of the first kind, Fuchsian groups in short. By abuse of notation, we will use the same notation for the groups in  $\mathrm{SL}(2,\mathbb{R})$ and   $\mathrm{SL}(2,\mathbb{R})/\{\pm \Id\}$. Let us also recall that whenever a group acts on a set, it divides the set into equivalence classes.

\begin{defn}
For a Fuchsian group $\Gamma$, a fundamental domain is a closed hyperbolic polygon $\mathcal{F}$ in $\mathcal{H}$ satisfying:
\begin{itemize}
\item[a)] For any $z,z'$ in the interior of $\mathcal{F}$, if there exists $\gamma\in\Gamma$ such that $\gamma(z)=z'$, then $z=z'$ and $\gamma=Id$.
\item[b)] For any $z\in\mathcal{H}$, there exists $z_0\in\mathcal{F}$ and $\gamma\in\Gamma$ such that $\gamma(z)=z_0$.
\end{itemize}
\end{defn}

Each election of a fundamental domain for the action of a Fuchsian group $\Gamma$ leads to a regular tessellation of the upper half-plane by hyperbolic polygons, which will be useful for the code construction. In fact the unlimited number of tessellations is one of the advantages of the hyperbolic plane compared to the Euclidean one.

Given a fundamental domain of $\Gamma$ and $z\in\mathcal{H}$, the problem of finding $z_0\in\mathcal{F}$ and $\gamma\in\Gamma$ such that $\gamma(z)=z_0$ is known as the \emph{point reduction problem} and requires an algorithmic solution referred to as the point reduction algorithm. Here, we will consider tessellations obtained from Fuchsian groups arising from quaternion algebras.

\subsection{Quaternion algebras, orders and arithmetic Fuchsian groups} \label{sec-alg}

Let $F$ be an arbitrary field with $\operatorname{char} F\neq 2$. The quaternion algebra $H$ denoted by $\left(\frac{a,b}{F}\right)$,  $a,b\in F\smallsetminus \{0\}$, is the $F$-algebra with $F$-basis $\{1,I,J,K\}$ subject to the multiplication rules
$I^2=a,J^2=b,K=IJ=-JI$. For more details on quaternion algebras, cf. \cite{vigneras} and \cite{alsinabayer}.

In a quaternion algebra, there is a natural conjugation such that for $\omega= x+yI+zJ+tK\in H$ the conjugate is $\overline{\omega}=x-yI-zJ-tK$. Then the reduced trace and the reduced norm are defined by
$$
\mathrm{Tr}(\omega)=\omega+\overline{\omega}=2x, \qquad
\mathrm{N}(\omega)=\omega\overline{\omega}=x^2-ay^2-bz^2+abt^2.
$$
It is well known that a quaternion $F$-algebra $H$ is a central simple algebra of dimension $4$ over $F$, and by Wedderburn's structure theorem $H$ is isomorphic to either $\mathrm{M}(2,F)$ or to a skew field, also called a division $F$-algebra. If $F=\mathbb{C}$, or more generally $F$ is algebraically closed, only matrix algebras are obtained. If $F=\mathbb{R}$, or in general a local field different from $\mathbb{C}$, there exists a unique division $F$-algebra up to isomorphism. In the real case, the unique quaternion division algebra is the Hamilton quaternion $\mathbb{R}$-algebra, $\mathbb{H}=\left(\frac{-1,-1}{\mathbb{R}}\right)$. Of course, for any field $F$,  $\left(\frac{1,1}{F}\right)\cong \mathrm{M}(2,F)$.

Given $F$ a number field, for each place  $\nu$ of $F$, that is an archimedean or non-archimedean absolute value of $F$, consider the local field $F_{\nu}$. Then $H_{\nu}:=H\otimes F_{\nu}$ is a quaternion algebra over the local field. We say that $H$ is \emph{ramified} at $\nu$ if $H_{\nu}$ is a division algebra; otherwise, $H$ is said to \emph{split} at $\nu$. The set of places where $H$ is ramified is a finite set of even cardinality and it characterizes the quaternion algebra up to isomorphism.

From now on, consider $F$  a totally real algebraic number field with ring of integers $R$ and $[F:\mathbb{Q}]=n$. We will assume that $H$ is a division algebra, that is $H\ncong \mathrm{M}(2,F)$, and that $H$ ramifies precisely at $n-1$ out of the $n$ completions of $F$ with respect to the Galois embeddings of $F/\mathbb{Q}$ into $\mathbb{R}$, the archimedean places.  Namely, $H$ satisfies the following condition
\begin{equation}\label{cond-quat-alg}
H\otimes_{\mathbb{Q}} \mathbb{R} \cong\mathrm{M}(2,\mathbb{R})\times \mathbb{H}^{n-1}.
\end{equation}

In the case of quaternion $\mathbb{Q}$-algebras, this means that $H$ is an indefinite quaternion algebra. Such algebras were used to consider Fuchsian groups and to define associated Fuchsian codes in \cite{BHAR}.

\begin{lem} \label{reg-rep}
The following map $\phi$ is a monomorphism of $F$-algebras, giving a left regular representation of the quaternion algebra in a matrix algebra:
$$
\begin{array}{ccc}
\phi: \left(\dfrac{a,\, b}{F}\right) & \to & \mathrm{M}(2,\left(F(\sqrt{a})\right))\\
x+yI+zJ+tK & \mapsto &
\left(\begin{array}{ccc} x+y\sqrt{a} &\phantom{x} & z+t\sqrt{a}\\
b(z-t\sqrt{a})&\phantom{x} & x-y\sqrt{a}\end{array}\right).
\end{array}
$$
\end{lem}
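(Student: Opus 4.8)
The plan is to establish the three assertions of the statement in turn --- that $\phi$ is $F$-linear, that it is multiplicative, and that it is injective --- and then to explain why the displayed map is precisely the left regular representation of $H$ over $F(\sqrt a)$, which is what organises the computation.

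$F$-linearity is immediate, since each entry of $\phi(x+yI+zJ+tK)$ is an $F$-linear expression in $x,y,z,t$. For multiplicativity I would first reduce to the basis: by $F$-linearity of $\phi$ together with bilinearity of matrix multiplication it suffices to check $\phi(ef)=\phi(e)\phi(f)$ for $e,f$ ranging over the $F$-basis $\{1,I,J,K\}$, and since every product of two of these basis elements is an $F$-multiple of a basis element via the relations $I^2=a$, $J^2=b$, $K=IJ=-JI$ (so that, for instance, $IK=aJ$, $KI=-aJ$, $K^2=-ab$), this reduces further to the identities $\phi(1)=\Id$, $\phi(I)^2=a\Id$, $\phi(J)^2=b\Id$, $\phi(K)=\phi(I)\phi(J)$ and $\phi(J)\phi(I)=-\phi(I)\phi(J)$. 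These are routine $2\times 2$ matrix computations read straight off the formula for $\phi$, so I would not write them out.

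For injectivity the quickest route is structural: $H=\left(\frac{a,b}{F}\right)$ is a central simple $F$-algebra, hence a simple ring, so $\ker\phi$ is either $0$ or all of $H$; since $\phi(1)=\Id\neq 0$, this forces $\ker\phi=0$. If one would rather not invoke simplicity, the same conclusion follows by inspection of the matrix entries: $\phi(x+yI+zJ+tK)=0$ makes the diagonal entries $x+y\sqrt a$ and $x-y\sqrt a$ vanish, hence $2x=2y\sqrt a=0$ and so $x=y=0$ because $\operatorname{char}F\neq 2$ and $a\neq 0$; the off-diagonal entries then give $z+t\sqrt a=z-t\sqrt a=0$, using $b\neq 0$, whence $z=t=0$.

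Finally, to justify the phrase \emph{left regular representation}, I would identify $F(\sqrt a)$ with the commutative subalgebra $F[I]\subseteq H$ through $\sqrt a\mapsto I$, note that $\{1,J\}$ is a basis of $H$ as a (right) $F[I]$-module --- this uses only that $\{1,I,J,K\}$ is an $F$-basis and $IJ=K$ --- and observe that left multiplication $L_\omega\colon h\mapsto\omega h$ by a fixed $\omega\in H$ is $F[I]$-linear for this module structure, since $L_\omega(h\ell)=\omega(h\ell)=(\omega h)\ell=L_\omega(h)\ell$. Writing $L_\omega$ as a matrix over $F(\sqrt a)$ in the normalised basis $\{1,\,b^{-1}J\}$ returns exactly the matrix $\phi(\omega)$ of the statement, exhibiting $\phi$ as a left regular representation, from which the homomorphism property and injectivity both follow at once (a regular representation of a unital ring is faithful). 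I do not expect a genuine obstacle in any of this; the only points needing a little care are confirming that $\{1,J\}$ really is an $F[I]$-basis of $H$ --- so that when $a$ is a square in $F$ one must read ``$F(\sqrt a)$'' as the split quadratic algebra $F[x]/(x^2-a)$, a case not relevant to the arithmetic setting of this paper, where $H$ is a division algebra --- and keeping the left/right conventions consistent so that the change of basis reproduces the displayed matrix exactly.
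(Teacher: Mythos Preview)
The paper states this lemma without proof, treating it as a standard fact about quaternion algebras. Your argument is correct and complete: the reduction of multiplicativity to the relations $\phi(I)^2=a\Id$, $\phi(J)^2=b\Id$, $\phi(I)\phi(J)=-\phi(J)\phi(I)$ is the usual device, both injectivity arguments are valid, and your identification of $\phi$ with left multiplication on the right $F[I]$-module $H$ written in the basis $\{1,\,b^{-1}J\}$ does reproduce the displayed matrix exactly (the rescaling by $b^{-1}$ is precisely what moves the factor $b$ from the upper-right to the lower-left entry, matching the paper's convention). Your closing caveat about reading $F(\sqrt a)$ as $F[x]/(x^2-a)$ when $a$ is a square is also well taken, and indeed irrelevant here since $H$ is assumed to be a division algebra.
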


\begin{rem}
Notice that for any $\omega\in H$, $\mathrm{N}(\omega)=\mathrm{det}\left(\phi(\omega)\right)$, and $\mathrm{Tr}(\omega)=\mathrm{Tr}\left(\phi(\omega)\right)$. If $\sqrt{a}\in \mathbb{R}$, then $\mathrm{M}(2,\left(F(\sqrt{a})\right))\subseteq \mathrm{M}(2,\mathbb{R})$. In particular, if we restrict to quaternion elements in $H$ with reduced norm equal to 1, then the image under $\phi$ is contained in $\mathrm{SL}(2,\mathbb{R})$.
\end{rem}

An $R$-order $\mathcal{O}$ in $H$ is a finitely generated $R$-submodule, \emph{i.e.},  a subring such that $\mathcal{O}\otimes F\simeq H$. The elements in $\mathcal{O}$ are integral, namely for any $\omega\in\mathcal{O}$, its characteristic polynomial  $x^2-\mathrm{Tr}(\omega) x + \mathrm{N}(\omega)$ has coefficients in $R$.

For an order $\mathcal{O}$,  a Fuchsian group $\Gamma$ is obtained by using the map $\phi$:
$$
\Gamma=\phi(\mathcal{O}^*_{+}), \qquad \text{where } \mathcal{O}^*_{+}=  \{ \omega\in\mathcal{O} \,\mid \, \omega \mbox{ invertible}, \, \mathrm{N}(\omega)>0\}.
$$

In the case of the base field being $\mathbb{Q}$, it is interesting to consider Eichler orders $\mathcal{O}$. Eichler orders are intersections of two maximal orders. When $H$ is indefinite, the discriminant of the quaternion algebra $D\in \mathbb{Z}$ is defined as the product of all finite primes where $H$ is ramified. The discriminant  is an invariant of the quaternion algebra. The group $\phi(\mathcal{O}^*_{+})$ is denoted by $\Gamma(D,N)$,  where $N\in \mathbb{N}$ is the level of the Eichler order, and $N=1$ for a maximal order. The group $\Gamma(D,N)$ is well-defined up to conjugation.

\begin{rem}
After Weil's results on the classification of classical groups, the list of all arithmetic subgroups of $\SL(2,\mathbb{R})$  is exhausted up to commensurability by Fuchsian groups coming from quaternion algebras over totally real number fields, cf. \cite{katok}, where two groups $G_1$ and $G_2$ are said to be commensurable if $G_1\cap G_2$ has finite index both in $G_1$ and in $G_2$. Thus the Fuchsian groups derived from quaternion algebras are in the main focus when studying tessellations.
\end{rem}

Fuchsian groups derived from quaternion algebras and their quotients $\Gamma \backslash \mathcal{H}$ also led to the theory of Shimura curves in the sixties \cite{shimura1967}. The case of matrix algebras corresponds to classical modular curves. Since we assumed that $H$ is a division algebra, the quotient is already compact and there are no cusps. Thus the classical problems of finding fundamental domains and reducing points to a given fundamental domain
call for algorithmic solutions different from those available for the modular case. For more results on fundamental domains, see \cite{alsinabayer}, \cite{johansson}, \cite{voight}.

As already mentioned, a general algorithm for the point reduction problem was recently proposed in \cite{bayerremon}. Some specific  examples  can  be found in \cite{BHAR}.

\subsection{The point reduction algorithm (PRA)}

As the decoding of Fuchsian codes is based on the point reduction algorithm, let us summarize here its main features. See \cite{bayerremon} for validity and complexity proofs.

First, let us recall that the construction of a fundamental domain for $\Gamma$ following Fords's method (cf. \cite{katok}) is based on the use of \emph{isometric circles} $I(\gamma)$, \emph{i.e.}, hyperbolic lines in the upper half-plane associated to the  matrices $\gamma\in \Gamma$:
$$
I(\gamma)= \{z\in\mathcal{H} \,\mid \,  |c z+d| =1\},  \quad
\text{ for } \gamma=\left( \begin{array}{cr} a & \quad b \\ c & d \end{array}\right)\in\Gamma, c\not=0.
$$

\begin{notation}\label{Mdef} \label{rem-G}
For any Fuchsian group $\Gamma$ and any fixed fundamental domain $\mathcal{F}(\Gamma)$, let us denote by $G$ the minimal subset of $\Gamma$ such that the edges of $\mathcal{F}(\Gamma)$ are included in the set of isometric circles defined by the elements of $G$. As a presentation of the group $\Gamma$ arises from the pairing of the edges, we can assume that the generators of $\Gamma$ are included in $G$.
The set $G$ splits into two subsets denoted by $G^{\text{int}}$ and $G^{\text{ext}}$ according to whether the fixed fundamental domain is located in the interior or in the exterior of each isometric circle, respectively. Hence, if $\ext(I(\gamma))$ and $\inte(I(\gamma))$  denote the exterior and the interior of the isometric circle $I(\gamma)$,  the fundamental domain $\mathcal{F}(\Gamma)$ is the closure of
 $$ \bigcap_{\gamma\in G^{\text{ext}}} \ext(I(\gamma)) \cap  \bigcap_{\gamma\in G^{\text{int}}} \inte(I(\gamma)).$$
\end{notation}

Now we are ready to introduce the point reduction algorithm (PRA). It gives a solution to the reduction point problem; namely, it reduces a given point $z\in \mathcal{H}$ to a point $z_0\in \mathcal{F}$, and yields a transformation $\gamma\in \Gamma$ such that $\gamma(z)=z_0$.

\begin{algorithm}\label{algorithm}
\textbf{PRA (Point Reduction Algorithm)}\\
  \begin{itemize}
\item[]  \textbf{Step 1} Initialize: $z_0 = z$ and $\gamma = \mathrm{Id}$.
\item[]  \textbf{Step 2} Check if $z_0 \in \mathcal{F}$.
         \newline \hphantom{\textbf{Step 2\,}}
         If $z_0 \in \mathcal{F}$, return $z_0$ and $\gamma$. Quit.
         \newline \hphantom{\textbf{Step 2\,}}
        If $z_0 \not\in \mathcal{F}$, return $g\in G$ such that:
        \newline \hphantom{\textbf{Step 2\,}}
        \hspace{0.5cm}$z_0 \in \mathrm{int}(\mathrm{I}(g))$, if $g\in G^{\text{ext}}$,
                \newline \hphantom{\textbf{Step 2\,}}
         \hspace{0.5cm}$z_0 \in \mathrm{ext}(\mathrm{I}(g))$ if $g\in G^{\text{int}}$.
\item[]  \textbf{Step 3}    Compute $z_0 = g (z_0)$ and $\gamma = g\cdot \gamma$. Go to Step 2.
  \end{itemize}
\end{algorithm}

The PRA will be used for the decoding of Fuchsian codes in the sequel. In order to study the complexity of the algorithm when applied to Fuchsian codes, we state the following remark and definition.
\begin{rem}
Consider $z,z'\in \gamma(\mathcal{F})$, $\gamma\in\Gamma$, $z\not=z'$. By construction, to run the PRA  with input $z$ or $z'$ will output different points $z_0$ or $z'_0$, respectively, but the same matrix $g=\gamma$, in the same number of steps.
\end{rem}

\begin{defn}\label{depth}
Given a matrix $\gamma\in\Gamma$, the \emph{depth} of $\gamma$, denoted by $\ell(\gamma)$, is the minimal number of iterations of the PRA  to reduce $\gamma(\tau)$ to the fundamental domain for any $\tau\in \mathcal{F}$.
\end{defn}

\section{General construction of Fuchsian codes}
In \cite{BHAR}, we described in detail how to construct and decode Fuchsian codes in an AWGN channel. In order to make this paper self-contained,  we shortly restate the process in this more general setting, focusing on some essential properties.

\subsection{General construction}

Let $\Gamma$ be a Fuchsian group as in Section 2.1.

The first step in the construction of the code is to fix a fundamental domain $\mathcal{F} = \mathcal{F}(\Gamma)$, which determines a tessellation of the complex upper half-plane $\mathcal{H}$, and a set $G\subset \Gamma$ whose corresponding isometry circles are the edges of the fundamental domain, following the notation in Section 2.

The main step in the process is to choose a set of $N$ different elements in $\Gamma$, $S_{\Gamma}=\{ \gamma_1, \ldots \gamma_N \}$, what is equivalent to choosing $N$ different tiles in the tessellation.
Moreover,  we choose $\tau$ to be an interior point of  $\mathcal{F}$; this condition ensures that $\gamma(\tau)\not=\tau$ for all $\gamma\in \Gamma\setminus \{\pm\Id\}$.

Finally, considering the action of the group $\Gamma$ in the complex upper half-plane $\mathcal{H}$, we obtain the codewords  $\gamma_1(\tau), \ldots \gamma_N(\tau)$ in $\mathcal{H}$.  The condition on $\tau$ ensures $\gamma_i(\tau)\not=\gamma_j(\tau)$. We can double the number of points by expanding to the lower half-plane in a natural way by including the opposites $-\gamma_i(\tau)$. Thus, the codebook consists of the $2N$ complex points constructed by using $\tau$ and $S_{\Gamma}$, and the symmetry with respect to the origin.  Based on the outlined process, we give a formal definition of a Fuchsian code below, and summarize the construction process  in Table 1.

\begin{defn}
Let  $\Gamma$ be a Fuchsian group. Given a fundamental domain $\mathcal{F}(\Gamma)$, a set $S_{\Gamma}=\{ \gamma_1, \ldots \gamma_N \}\subset \Gamma$,  and a point $\tau$ in the interior of $\mathcal{F}(\Gamma)$, we define the associated \emph{Fuchsian code} as $\mathcal{C} = \left\{\pm\gamma(\tau) \, \mid \,  \gamma \in S_\Gamma\right\}\subseteq \mathbb{C}$. The set of codewords is also referred as a $q$-\emph{nonuniform Fuchsian constellation},  $q$-NUF in short, where $q=|\mathcal{C}|=2N$ is the size of the code. The point $\tau$ will be called the \emph{center} of the code.
\end{defn}

\begin{table}[H]\label{construction-figure}
\caption{Sketch of the code construction process.}
$
\boxed{
\begin{array}{ccc}
\boxed{
\begin{array}{c}
\text{FIX:} \\
 \text{\small Fuchsian group }  \Gamma\\
\text{\small Fund.\,domain } \mathcal{F}(\Gamma)
\end{array}
      }
&
\rightarrow
\boxed{
\begin{array}{c}
\text{CHOOSE:} \\
S_{\Gamma}=\{\gamma_1, \ldots \gamma_N \}\subset \Gamma \\
\tau \text{\small \ in the interior of } \mathcal{F}(\Gamma)
\end{array}
}
&
\rightarrow
\boxed{
\begin{array}{c}
\text{SET CODEBOOK }\mathcal{C}: \\
\{\pm\gamma_1(\tau), \ldots \pm\gamma_N(\tau)\} \\
|\mathcal{C}|=2N
\end{array}
}
\end{array}
}
$
\end{table}

This construction was stated in \cite{BHAR} in the case of  groups $\Gamma(D,1)$ derived from quaternion algebras  over $\mathbb{Q}$. We refer the interested reader there for explicit examples, including details about representations of the groups,  fundamental domains,  centers of the codes, lists of codewords, as well as some experimental results.

The general construction stated above now allows us to construct more  general codes, as long as we are able to determine the respective fundamental domains. In particular, it can be applied to groups derived from quaternion algebras over a totally real field.  The behavior and performance of Fuchsian codes will essentially depend  on the choices in the intermediate step. An algebraic and geometric study of Fuchsian groups and their fundamental domains will therefore be useful for developing a general understanding of various code parameters, such as the minimum distance and average transmission power. We refer again to \cite{BHAR} for a more detailed exposition.

\subsection{Decoding of Fuchsian codes}

Let $\mathcal{C}\subseteq \mathbb{C}$ be a $q$-NUF constellation with center $\tau$, associated to a fixed Fuchsian group $\Gamma$ with a fixed fundamental domain $\mathcal{F}$.
It is clear that given $x\in\mathcal{C}$, $\Im(x)>0$,  the PRA described in Section 2.3 computes $\gamma\in \Gamma$ such that $x=\gamma(\tau)$, which  is equivalent to finding the tile containing $x$ in the tessellation of $\mathcal{H}$ induced by the fundamental domain $\mathcal{F}$. If $\Im(x)<0$, it is enough to consider $-x$ and then apply  the PRA.

In the context of AWGN channels, let $x=\gamma(\tau)\in \Cc\subset\C$ be the transmitted codeword  and $y$ the  received signal,  $y = x + \varepsilon=\gamma(\tau) + \varepsilon\in\mathbb{C}$, where $\varepsilon$ is the Gaussian noise. The basic  idea underlying our decoding technique is that, provided that the channel is of sufficiently good quality, the received signal $y$ will belong to  the tile $\gamma(\mathcal{F})$ determined by $x$. In other words, when we apply the PRA  to $y$, it will return $\gamma$ and  the transmitted codeword $x=\gamma(\tau)$ can be recovered. In order to measure the decoding complexity when employing the PRA, we define (also cf. Def. \ref{depth}):

\begin{defn}
The \emph{depth} of the code is $\ell(\mathcal{C}):=\max\{\ell(\gamma) \mid \gamma\in S_\Gamma\}$.
\end{defn}

Next, we describe the encoding and decoding process of Fuchsian codes in detail.  In order to remain in the upper half-plane whilst decoding, we initialize the algorithm with $z_0=y$ if $\Im(y)>0$, and with $z_0=-y$ if $\Im(y)<0$. Since $\R$ has measure zero in $\C$, the case $\Im(y)=0$ occurs with probability zero.

\begin{algorithm}[h!]\label{algorithm}
\textbf{Encoding and decoding of Fuchsian codes}\\
  \begin{itemize}
\item[]  \textbf{Step 1} Assign a matrix  $\gamma \in \mathbb{C}$.
\item[]  \textbf{Step 2} Compute the codeword $x = \gamma(\tau)$.
\item[]  \textbf{Step 3} Transmit $x$ using the AWGN channel.
\newline \hphantom{\textbf{Step 2\,}} The receiver obtains $y = x + \varepsilon$,  where $\varepsilon$ is the Gaussian noise.
\item[]\textbf{Step 4} Decode the signal $y$:
\newline \hphantom{\textbf{Step 2\,}}
         If $\Im(y)>0$, apply PRA to y, obtain $\gamma$.
\newline \hphantom{\textbf{Step 2\,}}
         If $\Im(y)<0$, save the sign information,  $y \leftarrow -y$,
\newline \hphantom{\textbf{Step 2\,}} \hphantom{ If $\Im(y)<0$} apply PRA to $y$, obtain $\gamma$.
\item[]  \textbf{Step 5} $\gamma \leftarrow \text{ (sign information)}\times \gamma$.
\end{itemize}
\end{algorithm}

The following theorem proves the existence of Fuchsian codes with logarithmic decoding complexity.

\begin{thm}
Let $\Gamma$ be a Fuchsian group containing a non-abelian free subgroup. There exist Fuchsian codes $\mathcal{C}$ associated to $\Gamma$ such that the decoding algorithm for $\mathcal{C}$ has logarithmic complexity in $|\mathcal{C}|$, namely,  the number $r_{\mathcal{C}}$ of arithmetic operations satisfies
$$
	r_{\mathcal{C}} = O(\log(|\mathcal{C}|).
$$
\end{thm}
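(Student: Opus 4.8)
The plan is to reduce the claim to an elementary exponential-growth estimate inside $\Gamma$. Fix a fundamental domain $\mathcal{F}=\mathcal{F}(\Gamma)$ together with the finite edge-pairing set $G\subset\Gamma$ of Notation \ref{Mdef}; the cardinality $|G|$ is then a constant depending only on the pair $(\Gamma,\mathcal{F})$ and not on the code. When a codeword $x=\gamma(\tau)$ is transmitted and the received signal $y=x+\varepsilon$ lies in the tile $\gamma(\mathcal{F})$ (the regime in which the decoder is designed to operate), decoding amounts to running the PRA on $y$, and by construction the PRA reduces every point of a fixed tile in the same number of steps, so this number equals the one needed to reduce $\gamma(\tau)$, i.e. $\ell(\gamma)\le\ell(\mathcal{C})$ (Definition \ref{depth}). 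By the complexity analysis of \cite{bayerremon}, each PRA iteration costs $O(|G|)=O(1)$ arithmetic operations: testing membership in $\mathcal{F}$ against the $|G|$ isometric circles, one M\"obius evaluation, and one matrix multiplication; the sign bookkeeping for $\Im(y)<0$ adds only constant overhead. Hence $r_{\mathcal{C}}=O(\ell(\mathcal{C}))$, and it is enough to produce, for arbitrarily large $N$, a set $S_\Gamma\subset\Gamma$ with $|S_\Gamma|=N$ and $\ell(\mathcal{C})=\max_{\gamma\in S_\Gamma}\ell(\gamma)=O(\log N)$, since then $|\mathcal{C}|=2N$ and $r_{\mathcal{C}}=O(\log|\mathcal{C}|)$.

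Next I would record that the depth is subadditive. Consider the adjacency graph of the tessellation $\{\gamma(\mathcal{F}):\gamma\in\Gamma\}$ — vertices are tiles, edges join tiles sharing a hyperbolic side — and let $d(\cdot,\cdot)$ be its graph metric. Because $\Gamma$ acts on this graph by automorphisms, $d$ is $\Gamma$-invariant, so $\delta(\gamma):=d(\mathcal{F},\gamma(\mathcal{F}))$ satisfies the triangle inequality $\delta(\gamma\eta)\le\delta(\gamma)+\delta(\eta)$, and $\delta(g)=1$ for $g\in G\setminus\{\pm\Id\}$ since $g(\mathcal{F})$ and $\mathcal{F}$ share the side lying on $I(g)$. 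The validity proof of the PRA in \cite{bayerremon} shows the algorithm reaches $\mathcal{F}$ along a shortest tile-path, i.e. $\ell(\gamma)=\delta(\gamma)$; in particular $\ell$ is subadditive. (If one prefers not to appeal to optimality of the PRA, it suffices to know $\ell(\gamma)\le\delta(\gamma)$, which is still enough below.)

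Finally I would invoke the hypothesis: $\Gamma$ contains a non-abelian free group and hence a rank-two free subgroup $F=\langle a,b\rangle$. Put $L:=\max\{\ell(a^{\pm1}),\ell(b^{\pm1})\}<\infty$. For a reduced word $w=s_1\cdots s_k$ with $s_i\in\{a^{\pm1},b^{\pm1}\}$, subadditivity gives $\ell(w)\le\sum_{i=1}^{k}\ell(s_i)\le kL$. The set $W_m\subset F$ of reduced words of length at most $m$ has $|W_m|=1+\sum_{j=1}^{m}4\cdot 3^{j-1}=2\cdot 3^m-1\ge 3^m$ elements, pairwise distinct in $\Gamma$ because $F$ is free, so $\{\pm\gamma(\tau):\gamma\in W_m\}$ consists of $2|W_m|$ distinct points for any interior point $\tau$ of $\mathcal{F}$. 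Taking $S_\Gamma:=W_m$ produces a Fuchsian code $\mathcal{C}_m$ with $N=|S_\Gamma|=2\cdot 3^m-1$, so $m\le\log_3 N$ and
$$
\ell(\mathcal{C}_m)=\max_{\gamma\in S_\Gamma}\ell(\gamma)\le mL\le L\log_3 N=O(\log N)=O(\log|\mathcal{C}_m|).
$$
Letting $m\to\infty$ gives Fuchsian codes of arbitrarily large size whose PRA-decoding uses $O(\log|\mathcal{C}|)$ operations, which is the assertion. The one genuinely delicate ingredient — and the place where I expect the real work, to be imported from \cite{bayerremon} — is the comparison between the PRA depth and the combinatorial tile-distance (equivalently, that the greedy PRA does not stray far from a geodesic path of tiles); granting that, the rest is just the triangle inequality together with exponential growth in $F_2$.
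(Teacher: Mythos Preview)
Your argument follows the same two-step outline as the paper's own sketch: first bound $r_{\mathcal{C}}=O(\ell(\mathcal{C}))$ because each PRA iteration costs $O(|G|)=O(1)$ operations independent of the code size, and then exploit the non-abelian free subgroup to produce sets $S_\Gamma$ whose cardinality is exponential in their maximal depth. The paper leaves the second step as a pointer to \cite{BHAR}, whereas you spell it out via subadditivity of $\ell$ (through the tile-adjacency metric $\delta$) together with the standard reduced-word count $|W_m|=2\cdot 3^m-1$ in $F_2$; this is precisely the intended mechanism, and your explicit flag that the comparison between $\ell$ and $\delta$ is the one nontrivial ingredient imported from \cite{bayerremon} matches how the paper itself treats that point.
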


The proof is analogous to the corresponding results in \cite{BHAR}, so we only provide a sketch of the proof. The first part of the proof is to count the maximal number of arithmetic operations when running the PRA. In each iteration, we got that the number of operations only depends on the fundamental domain, and not on the code size. As the number of iteration is bounded by the depth $\ell(\mathcal{C})$,  $r_{\mathcal{C}}=O(\ell(\mathcal{C}))$.
Secondly, by using the technical condition given by the existence of a non-abelian free subgroup, a Fuchsian code $\mathcal{C}$ can be constructed in such a way that $\ell(\mathcal{C})=O(\log(|\mathcal{C}|)$. The key point is to choose $S_{\Gamma}$ as large as possible while controlling the depth of their elements.

Combining these two parts, we deduce the existence of Fuchsian codes with logarithmic complexity.

Actually, for a fixed Fuchsian group $\Gamma$ the complexity of the decoding algorithm depends only on the selection of the subset $S_{\Gamma}$. The choice of the center of the code $\tau$ will influence the performance of the code, being related to the minimum border distance, as stated in \cite{BHAR} (see the code design criterion therein).

\subsection{The rate}
Let us reduce to the case of Fuchsian groups derived from quaternion algebras $H$ over $F$, as in Section \ref{sec-alg}. To this end, let $F$ be a totally real number field  with ring of integers $R$ and $[F:\mathbb{Q}]=n$. Now $H$ is a division algebra satisfying condition \ref{cond-quat-alg}, and $\Gamma=\phi(\mathcal{O}^*_{+})$ for an order $\mathcal{O}$ in $H$, and $\phi$ the regular representation of $H$ in $\mathrm{M}(2,\mathbb{R})$ (cf. Lemma \ref{reg-rep}). Consider a code $\mathcal{C} = \left\{\pm\gamma(\tau) \, \mid \,  \gamma \in S_\Gamma\right\}\subseteq \mathbb{C}$.

In the case $F=\mathbb{Q}$, when we restrict to the natural order $\mathbb{Z}[1,I,J,K]$, a complex number $\gamma(\tau)$ to be transmitted is identified with the matrix $\gamma\in S_{\Gamma}\subset \phi(\mathcal{O}^*_{+})$ (recall that we take $\tau$ an interior point of the fundamental domain). Writing $\gamma=\left(\begin{array}{cc}x+y\sqrt{a} & z+t\sqrt{a}\\b(z-t\sqrt{a}) & x-y\sqrt{a}\end{array}\right)$ with $x,y,z,t\in\mathbb{Z}$, we can identify $\gamma$ with the $4$-tuple $(x,y,z,t)\in\mathbb{Z}^4$, which is subject to the normic equation
\begin{equation}\label{normic}
x^2-ay^2-bz^2+abt^2=1,\quad a>0
\end{equation}
Thus,
the $4$-tuple consists of $3$ algebraically independent integers. This is equivalent to say that the set of $4$-tuples satisfying the normic equation has $3$ algebraic degrees of freedom. Notice that this is precisely the dimension of the algebraic set defined by the normic equation (which is not empty, since it contains all the infinite $4$-tuples attached to the Fuchsian group). The concept of algebraic code rate, denoted $R$, will be hence defined so that for this code we have $R=3$ symbols per channel use (spcu). We can easily generalize this notion of code rate for Fuchsian codes over totally real number fields.

Let $F$ be a totally real number field of degree $n$ with ring of integers $R$, let $B$ be a quaternion $F$-algebra satisfying condition \ref{cond-quat-alg} and $\Gamma=\phi\left(\mathcal{O}^*_+\right)$, with $\mathcal{O}$ a maximal $R$-order. Each matrix $\gamma\in\Gamma$ has the form
$\gamma=\left(\begin{array}{cc}x+y\sqrt{a} & z+t\sqrt{a}\\b(z-t\sqrt{a}) & x-y\sqrt{a}\end{array}\right)$ with $(x,y,z,t)\in R^4$. Fixing a $\mathbb{Z}$-basis of $R$, we can identify $x$ with an $n$-tuple $(x_1,...,x_n)\in\mathbb{Z}^n$ and analogously with $y$, $z$, and $t$. Hence, we can identify the matrix $\gamma$ with a $4n$-tuple $(x_1,...,x_n,y_1,...y_n,z_1...z_n,t_1,...,t_n)\in\mathbb{Z}^4$.

The fact that the $4$-tuple $(x,y,z,t)\in R^4$ satisfies the normic equation (defined over $R$) is equivalent to the fact that the corresponding $4n$-tuple satisfies a certain system of polynomial equations (defined over $\mathbb{Z}$). This system of polynomial equations defines an algebraic set which we denote by $A(\Gamma)$. Notice that this algebraic set has infinitely many elements, hence, its algebraic dimension is well defined.

\begin{defn}The algebraic code rate in symbols per channel use (spcu), or code rate from now on, of the Fuchsian code attached to a Fuchsian group $\Gamma=\phi(\mathcal{O}^*_+)$ satisfying condition \ref{cond-quat-alg} is the algebraic dimension of the algebraic set $A(\Gamma)$.
\label{defncoderate}
\end{defn}

\begin{rem}The code rate defined this way, measures how many degrees of freedoms are there in the set of $4n$-tuples attached to the Fuchsian group. We can think of the code rate, hence, as the number of algebraically independent (non-redundant) symbols in each $4n$-tuple, or as we said in the introduction, the maximal number of independent symbols embedded in each codeword $\gamma(\tau)$, with $\gamma\in\Gamma$.
\end{rem}

Notice that if an undetermined system of $t$ polynomial equations in $n$ variables has solutions, then the set of all complex solutions is an algebraic set of dimension at least $n-t$. In particular, the code rate of a Fuchsian code will be at least $4n-t$, being $t$ the number of equations defined over $\mathbb{Z}$ which are equivalent to the normic equation, which is defined over $R$.

The main result on the existence of Fuchsian codes of arbitrarily high rates is the following theorem. The proof consists on proving the two propositions stated after the theorem.

\begin{thm}[Main Theorem]\label{thm-rate}
Let $F$ be a totally real number field of degree $n$. There exist infinitely many Fuchsian codes with rate at least $3n$ attached to $F$.
\end{thm}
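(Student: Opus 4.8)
The plan is to split the statement into two parts, matching the two auxiliary propositions: \emph{(i)} for every totally real field $F$ of degree $n$ there exist infinitely many quaternion $F$-algebras $H$ satisfying \eqref{cond-quat-alg}, each producing, via a maximal $R$-order $\mathcal{O}$ and the representation $\phi$ of Lemma \ref{reg-rep}, a cocompact Fuchsian group $\Gamma=\phi(\mathcal{O}^*_+)$; and \emph{(ii)} for each such $\Gamma$ the normic equation \eqref{normic} over $R$ is equivalent to a system of exactly $n$ polynomial equations over $\mathbb{Z}$ in the $4n$ coordinate variables, whence, by the observation preceding Theorem \ref{thm-rate}, the associated algebraic set $A(\Gamma)$ has dimension at least $4n-n=3n$. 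Since the code rate of a Fuchsian code is by Definition \ref{defncoderate} the dimension of $A(\Gamma)$, a quantity depending only on $(H,\mathcal{O})$ and not on the chosen subset $S_\Gamma\subset\Gamma$ nor on the center $\tau$, once \emph{(i)} and \emph{(ii)} hold one already obtains infinitely many distinct Fuchsian codes of rate $\ge 3n$ attached to $F$ by fixing a single admissible $\Gamma$ and letting $S_\Gamma$ range over the infinitely many finite subsets of the infinite group $\Gamma$ (or, alternatively, by varying $H$).

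For \emph{(i)} I would invoke the Albert--Brauer--Hasse--Noether classification of quaternion algebras over a number field: a quaternion $F$-algebra is determined up to isomorphism by its finite, even-cardinality set of ramified places, and every such set occurs. Taking $H$ ramified at exactly $n-1$ of the $n$ real places of $F$ together with any finite set $T$ of finite places with $|T|+(n-1)$ even yields $H\otimes_{\mathbb{Q}}\mathbb{R}\cong \mathrm{M}(2,\mathbb{R})\times\mathbb{H}^{n-1}$, i.e.\ condition \eqref{cond-quat-alg}, and, the ramification set being non-empty, $H$ is a division algebra. There are infinitely many admissible $T$, hence infinitely many non-isomorphic such $H$. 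For any of them and any maximal $R$-order $\mathcal{O}$, the group $\Gamma=\phi(\mathcal{O}^*_+)$ is a Fuchsian group of the first kind with $\Gamma\backslash\mathcal{H}$ compact; in particular $\mathcal{O}^*_+$, hence $\Gamma$, is infinite.

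For \emph{(ii)}, after rescaling the generators $I,J$ we may assume $a,b\in R$ with $a$ positive at the split real place. Fix a $\mathbb{Z}$-basis $\mu_1=1,\mu_2,\dots,\mu_n$ of $R$ with structure constants $\mu_i\mu_j=\sum_k c_{ijk}\mu_k$, $c_{ijk}\in\mathbb{Z}$. Writing $\omega=x+yI+zJ+tK$ with $x=\sum_i x_i\mu_i$ and likewise for $y,z,t$, the reduced norm $\mathrm{N}(\omega)=x^2-ay^2-bz^2+abt^2\in R$ expands as $\sum_{k=1}^n Q_k\,\mu_k$, where each $Q_k$ is an integral quadratic form in the $4n$ variables $x_1,\dots,x_n,y_1,\dots,t_n$. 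As $\mu_1=1$, the equation $\mathrm{N}(\omega)=1$ is equivalent to $Q_1=1$, $Q_2=\cdots=Q_n=0$, i.e.\ exactly $n$ equations over $\mathbb{Z}$; this is the system cutting out $A(\Gamma)$ in $\mathbb{A}^{4n}$. Since $\mathcal{O}^*_+$ is infinite, $A(\Gamma)$ has infinitely many integral points, so it is non-empty and each of the $n$ equations lowers the dimension of every component by at most one, giving $\dim A(\Gamma)\ge 3n$. (Over $\mathbb{C}$ one has $H\otimes_{\mathbb{Q}}\mathbb{C}\cong\mathrm{M}(2,\mathbb{C})^n$ with $\mathrm{N}$ the coordinatewise determinant, so $A(\Gamma)(\mathbb{C})\cong\mathrm{SL}(2,\mathbb{C})^n$ and in fact $\dim A(\Gamma)=3n$; only the inequality is needed.)

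The main obstacle is in part \emph{(i)}: one must check that the prescribed local data satisfy the global parity condition of Albert--Brauer--Hasse--Noether and that the resulting $H$ indeed has the completion \eqref{cond-quat-alg}, and one must know $\mathcal{O}^*_+$ is infinite so that $A(\Gamma)$ is non-empty --- which rests on the theory of arithmetic Fuchsian groups (cocompactness, Borel--Harish-Chandra) rather than on anything elementary. The expansion and dimension count in \emph{(ii)} are then routine linear algebra over $R$.
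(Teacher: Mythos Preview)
Your proposal is correct and follows essentially the same route as the paper: the paper proves the theorem via two propositions that match your parts \emph{(i)} and \emph{(ii)} --- Proposition \ref{ex-quat-alg} produces infinitely many quaternion $F$-algebras satisfying \eqref{cond-quat-alg} by the classification theorem (choosing a ramification set of even cardinality containing all but one archimedean place, padding with a non-archimedean place when $n$ is even), and Proposition \ref{prop-rate-3n} expands the normic equation over $R$ in a $\mathbb{Z}$-basis $\{\theta_1=1,\dots,\theta_n\}$ to obtain exactly $n$ quadratic equations over $\mathbb{Z}$ in $4n$ variables, concluding $\dim A(\Gamma)\ge 3n$ from the fact that $A(\Gamma)$ has infinitely many points. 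Your extra observation that $A(\Gamma)(\mathbb{C})\cong\mathrm{SL}(2,\mathbb{C})^n$, giving the exact value $\dim A(\Gamma)=3n$, goes slightly beyond what the paper states but is not needed for the theorem.
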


In order to prove our main theorem we  first prove that, for a fixed quaternion algebra satisfying \ref{cond-quat-alg}, there are Fuchsian codes of rate at least $3n$. This is Proposition \ref{prop-rate-3n}. Second, we prove the existence of such quaternion algebras in Proposition \ref{ex-quat-alg}.

\begin{prop} \label{prop-rate-3n}
Let $F/\mathbb{Q}$ be a totally real number field of degree $n$, with ring of integers $R$, and $H$ a quaternion $F$-algebra satisfying condition \ref{cond-quat-alg}. Then, a Fuchsian code associated to the natural order $R[1,I,J,K]$ has code rate at least $3n$.
\end{prop}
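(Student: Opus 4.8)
The plan is to unwind Definition~\ref{defncoderate}: the code rate is $\dim A(\Gamma)$, where $A(\Gamma) \subseteq \mathbb{A}^{4n}$ is the algebraic set cut out by the system of integer polynomial equations equivalent to the single normic equation over $R$. So it suffices to show two things: that this system consists of at most $n$ equations, and that $A(\Gamma)$ is nonempty over an algebraically closed field. Granting both, the affine dimension theorem (iterated Krull principal ideal theorem) gives that every irreducible component of a nonempty common zero locus of $n$ polynomials in $\mathbb{A}^{4n}$ has dimension at least $4n-n = 3n$, whence $\dim A(\Gamma)\ge 3n$ and the proposition follows.

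First I would make the defining equations explicit. Fix a $\mathbb{Z}$-basis $e_1,\dots,e_n$ of $R$, and (with $a=I^2,b=J^2\in R$, so that $R[1,I,J,K]$ is genuinely an order) write $x=\sum_i x_i e_i$ and likewise for $y,z,t$, with $x_i,y_i,z_i,t_i\in\mathbb{Z}$; this is precisely the identification of $\gamma\in\phi(\mathcal{O})$ with a $4n$-tuple used just before Definition~\ref{defncoderate}. The reduced norm $\mathrm{N}(x+yI+zJ+tK)=x^2-ay^2-bz^2+abt^2$ is an element of $R$, and its $n$ coordinates in the basis $e_1,\dots,e_n$ are quadratic polynomials in the $4n$ variables $x_i,y_i,z_i,t_i$ with coefficients in $\mathbb{Z}$ (built from the structure constants of $R$ and the coordinates of $a,b$). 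Hence the equation $\mathrm{N}(\omega)=1$ over $R$ is equivalent to the vanishing of these $n$ polynomials over $\mathbb{Z}$, i.e.\ to exactly the $n$ equations defining $A(\Gamma)$. We only ever use ``at most $n$'', so any accidental dependence among them is harmless.

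Next I would check that $A(\Gamma)$ is nonempty; in fact it has infinitely many integral points. Let $\mathcal{O}^{1}\subset\mathcal{O}$ be the set of elements of reduced norm $1$. By Lemma~\ref{reg-rep} and the remark following it, $\phi(\mathcal{O}^{1})\subseteq\mathrm{SL}(2,\mathbb{R})$, and $\phi(\mathcal{O}^{1})$ is a Fuchsian group of the first kind; since $H$ is a division algebra satisfying condition~\ref{cond-quat-alg}, the quotient $\phi(\mathcal{O}^{1})\backslash\mathcal{H}$ is compact, so $\phi(\mathcal{O}^{1})$ is an infinite group. Each of its elements gives a $\mathbb{Z}$-point of $A(\Gamma)$, so $A(\Gamma)$ is a nonempty algebraic set (with infinitely many integral points). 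Combining this with the dimension estimate above yields $\dim A(\Gamma)\ge 3n$, which by Definition~\ref{defncoderate} is the assertion.

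There is no deep difficulty here; the argument is essentially bookkeeping, and I expect the write-up to be short. The only points that need care are (i) confirming that the single norm equation over $R$ really translates into no more than $n$ equations over $\mathbb{Z}$ --- immediate once a $\mathbb{Z}$-basis of $R$ is fixed --- and (ii) ensuring that $A(\Gamma)$ is nonempty as a complex (or $\overline{\mathbb{Q}}$-) variety, which is automatic from the infinitude of $\phi(\mathcal{O}^{1})$. The mild subtlety worth flagging is simply that the dimension in Definition~\ref{defncoderate} must be read as the dimension of the algebraic set over an algebraically closed field, so that the affine dimension theorem applies.
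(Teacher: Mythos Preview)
Your proposal is correct and follows essentially the same route as the paper: fix a $\mathbb{Z}$-basis of $R$, expand the single normic equation over $R$ into $n$ quadratic equations over $\mathbb{Z}$, observe that $A(\Gamma)$ has infinitely many points coming from the Fuchsian group, and conclude $\dim A(\Gamma)\ge 4n-n=3n$. Your version is slightly more explicit in naming the affine dimension theorem and in noting that the dimension is to be computed over an algebraically closed field, but the argument is the same.
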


\begin{proof}
Consider the natural order $R[1,I,J,K]$ of the quaternion algebra $H=\left(\frac{a,\, b}{F}\right)$. Then $\Gamma$ is determined by  $4$-tuples $(x,y,z,t)\in R^4$ satisfying  $x^2-ay^2-bz^2+abt^2=1$.

Let $\{\theta_1,...,\theta_n\}$ be a $\mathbb{Z}$-basis of $R$ with $\theta_1=1$. Writing $x=\sum_{k=1}^{n}x_k\theta_k$, $y=\sum_{k=1}^{n}y_k\theta_k$, $z=\sum_{k=1}^{n}z_k\theta_k$, $t=\sum_{k=1}^{n}t_k\theta_k$, each of these algebraic integers can be identified with its coordinates in the integral basis. Thus any 4-tuple $(x,y,z,t)\in R^4$ can be identified as a $4n$-tuple of rational integers.

Let us expand the normic equation $x^2-ay^2-bz^2+abt^2=1$ such that it corresponds to a system of polynomial equations defined over $\mathbb{Z}$. We set  $x^2=\sum_{k=1}^{n}f_{x,k}(x_1,...,x_{n})\theta_{k}$, with $f_{x,k}\in\mathbb{Z}[x_1,...,x_n]$ a quadratic homogeneous polynomial, and analogously for $y,z,t$, and obtain
$$
x^2-ay^2-bz^2+abt^2=\sum_{k=0}^{n-1}(f_{x,k}-af_{y,k}-bf_{z,k}+abf_{t,k})\theta_{k}.
$$
Since $a,b\in R$, the equation can be rewritten as
$$
x^2-ay^2-bz^2+abt^2=\sum_{k=1}^{n}g_k\theta_k,
$$
with $g_k\in\mathbb{Z}[x_1,...,x_n,...,t_1,...,t_n]$ quadratic homogeneous polynomials.
The condition $x^2-ay^2-bz^2+abt^2=1$ now becomes equivalent to
$$
\begin{array}{l}
g_1(x_1,...,x_n,...,t_1,...,t_n)=1,\\
g_k(x_1,...,x_n,...,t_1,...,t_n)=0, \quad \text{for } 2\leq k\leq n.
\end{array}
$$
This system defines the algebraic set $A(\Gamma)$. Since a $4n$-tuple corresponding to an element of the Fuchsian group bears $n$ restrictions, and the algebraic set $A(\Gamma)$ contains infinitely many solutions, we see that the algebraic dimension of $A(\Gamma)$, or equivalently, the code rate of the Fuchsian code attached to $\Gamma$ is at least $3n$ spcu, proving the proposition.
\end{proof}

The following proposition addresses the question whether there exist quaternion algebras to which the above proposition can be applied.

\begin{prop}\label{ex-quat-alg}
Let $F$ be a totally real number field of degree $n$. There exist infinitely many quaternion algebras $H$ over $F$ satisfying the condition \ref{cond-quat-alg}.
\end{prop}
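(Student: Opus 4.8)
The plan is to reduce the statement to the Albert--Brauer--Hasse--Noether classification of quaternion algebras over a number field. Recall (see \cite{vigneras}) that a quaternion algebra over $F$ is determined up to isomorphism by its set of ramified places, that this set is always finite and of even cardinality, and that conversely \emph{every} finite set $S$ of places of $F$ with $|S|$ even is the ramification set of some quaternion $F$-algebra $H_S$, unique up to isomorphism; moreover $H_S$ is a division algebra precisely when $S\neq\emptyset$. Thus it suffices to produce infinitely many pairwise distinct finite sets of places of $F$, each of even cardinality, each nonempty, and each containing exactly $n-1$ of the $n$ archimedean places of $F$.

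First I would record the archimedean side. Since $F$ is totally real of degree $n$, it has exactly $n$ real places $v_1,\dots,v_n$, and for any quaternion $F$-algebra $H$ with ramification set $S$ one has $H\otimes_{\mathbb{Q}}\mathbb{R}\cong\prod_{i=1}^n H\otimes_F F_{v_i}$, the $i$-th factor being isomorphic to $\mathbb{H}$ if $v_i\in S$ and to $\mathrm{M}(2,\mathbb{R})$ otherwise. Hence condition \ref{cond-quat-alg} holds for $H$ if and only if $S$ contains exactly $n-1$ of the $v_i$; after renumbering, $S\cap\{v_1,\dots,v_n\}=\{v_2,\dots,v_n\}$.

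Next I would fix parity using finite places. Set $S_0=\{v_2,\dots,v_n\}$, so $|S_0|=n-1$. The ring of integers $R$ has infinitely many nonzero prime ideals, so $F$ has infinitely many finite places; fix an infinite sequence $\mathfrak{p}_1,\mathfrak{p}_2,\dots$ of distinct ones. For $m\ge 1$ put $T_m=\{\mathfrak{p}_m\}$ when $n$ is even and $T_m=\{\mathfrak{p}_{2m-1},\mathfrak{p}_{2m}\}$ when $n$ is odd, and let $S_m=S_0\cup T_m$. Then $|S_m|$ equals $n$ (if $n$ even) or $n+1$ (if $n$ odd), so it is always even; $|S_m|\ge 2$, so $S_m\neq\emptyset$; and the $S_m$ are pairwise distinct since the $T_m$ are. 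Let $H_m=H_{S_m}$ be the quaternion $F$-algebra with ramification set $S_m$ given by the classification theorem. Each $H_m$ is then a division algebra, satisfies condition \ref{cond-quat-alg} by the archimedean computation above, and the $H_m$ are pairwise non-isomorphic because the ramification set is a complete isomorphism invariant and the $S_m$ are distinct. This yields the desired infinite family.

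The only nontrivial ingredient is the existence (surjectivity) part of the classification theorem, which rests on the Hasse principle for central simple algebras over number fields; I would cite \cite{vigneras} for it rather than reprove it. Everything else is bookkeeping, the one mildly delicate point being the parity split according to whether $n$ is even or odd, which is exactly what the choice of $T_m$ is designed to handle.
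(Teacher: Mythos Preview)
Your proof is correct and follows essentially the same approach as the paper: invoke the classification theorem for quaternion algebras over number fields, fix the archimedean part of the ramification set to be $n-1$ of the $n$ real places, and then adjust with finite places to achieve even cardinality while producing infinitely many distinct ramification sets. The only cosmetic difference is in the parity bookkeeping: the paper, for $n$ odd, starts with the purely archimedean set (already of even size) and then adjoins an even number of finite places to obtain the infinite family, whereas you uniformly adjoin one or two finite places according to the parity of $n$; your version has the minor advantage of handling $n=1$ cleanly and making the pairwise non-isomorphism explicit.
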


\begin{proof}
If $n\geq 3$ is odd, define $\Sigma$ to be the set of all but one archimedean absolute values of $F$. Otherwise, define $\Sigma$ as the set of all but one archimedean absolute places and add a non-archimedean absolute value $\nu_{\frak{p}}$ attached to a prime ideal $\frak{p}$ of $F$.
Thus, $\Sigma$ is of even cardinality and, by the well-known classification theorem of quaternion algebras, there exists a unique quaternion algebra $H$ up to isomorphism such that $H$ ramifies exactly for each $\nu\in \Sigma$. This is equivalent to say $H_{\nu}=H\otimes_{\mathbb{Q}}{F_{\nu}}=\mathrm{M}(2,F_{\nu})$ for all $\nu\not\in \Sigma$. Therefore $H$ splits only at one archimedean absolute value satisfying the condition \ref{cond-quat-alg}.

The corresponding result holds for $\Sigma'$ constructed from $\Sigma$ by adding an even number of non-archimedean absolute values as before. Therefore, there exists infinitely many quaternion algebras satisfying the desired condition.
\end{proof}

\subsection{Cyclotomic Fuchsian codes}

Let $p$ be an odd prime,  $\zeta_p\neq 1$ a primitive $p$-th root of unity, and consider the $p$-th cyclotomic field $\mathbb{Q}(\zeta_p)$, $[\mathbb{Q}(\zeta_p):\mathbb{Q}]=p-1$. Then, fix the number field $F=\mathbb{Q}(\zeta_p+\zeta_p^{-1})$,  which is the maximal totally real subfield of $\mathbb{Q}(\zeta_p)$. We have that $[F:\mathbb{Q}]=(p-1)/2$. This will allow us to construct infinitely many Fuchsian codes of rate at least $3(p-1)/2$ provided that we can find quaternion algebras $H$ over $F$ satisfying the required condition, namely $H\otimes_{\mathbb{Q}}\mathbb{R}=\mathrm{M}(2,\mathbb{R})\times\mathbb{H}^{(p-3)/2}$. Let us denote by $\Omega_F$ the set of all possible absolute values attached to $F$, archimedean or not.

The following technical lemma will be used in order to make easier the construction of quaternion algebras over the field $\mathbb{Q}(\zeta_p+\zeta_p^{-1})$ satisfying the condition \ref{cond-quat-alg} and to be more explicit in the description of such quaternion $F$-algebras over which we can construct Fuchsian codes.

\begin{lem}[\cite{milnecft}, 6.13]Let $F$ be a number field, $b\in F^*$ and $\Sigma\subseteq\Omega_F$ a finite subset  of even cardinality. Suppose that $b$ is a non-square element in $F_{\nu}$ for any $\nu\in \Sigma$. Then, there exists an element $a\in K^*$ such that the quaternion algebra $\left(\frac{a,b}{F}\right)$ splits precisely at the absolute values $\nu\not\in \Sigma$.
\label{wiese}
\end{lem}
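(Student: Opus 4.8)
The plan is to reduce the statement to the Grunwald--Wang theorem (or equivalently, to the surjectivity of the map $F^*/F^{*2} \to \prod_{\nu\in\Sigma} F_\nu^*/F_\nu^{*2}$ on the relevant classes) together with the Hasse principle for quaternion algebras. First I would recall that a quaternion algebra $\left(\frac{a,b}{F}\right)$ over a local field $F_\nu$ splits if and only if $b$ is a norm from the extension $F_\nu(\sqrt{a})/F_\nu$; in particular, if $b \in F_\nu^{*2}$ then the algebra splits automatically at $\nu$, regardless of $a$. So for the places $\nu \notin \Sigma$ I only need to control those $\nu$ at which $b$ happens to be a nonsquare, and I want to choose $a$ so that $a$ becomes a \emph{square} (hence $b$ a norm) at all such $\nu \notin \Sigma$, while $a$ is chosen to make $\left(\frac{a,b}{F}\right)$ ramified at each $\nu\in\Sigma$.

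The key step is to produce, via Grunwald--Wang, an element $a\in F^*$ whose image in $F_\nu^*/F_\nu^{*2}$ is prescribed at a finite set of places: I would ask for $a$ to lie in a fixed nonsquare class at each $\nu\in\Sigma$ that makes $b$ a non-norm there (since $b$ is assumed a nonsquare in $F_\nu$ for $\nu\in\Sigma$, such a local class exists — e.g. one can take $a$ to represent the class so that $F_\nu(\sqrt a)/F_\nu$ is the unramified quadratic extension when $\nu$ is non-archimedean, which does not contain $b$ if $b$ has odd valuation, or an appropriate ramified choice otherwise; at archimedean $\nu\in\Sigma$ take $a$ negative), and I would ask for $a$ to be a square in $F_\nu$ for finitely many extra non-archimedean $\nu$ (namely those dividing $2$ and those where $b$ is a unit nonsquare — there are only finitely many such places once we fix $b$, up to the ones we can absorb). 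Grunwald--Wang guarantees a global $a$ with all these local square-class conditions simultaneously, with the usual caveat that the ``special case'' obstruction for the exponent-$2$ situation must be checked, but this obstruction is vacuous here because $\Sigma$ already contains at least the archimedean constraints and the conditions we impose are of the admissible type. Then the Hasse--Minkowski/Albert--Brauer--Hasse--Noether principle says the global quaternion algebra $\left(\frac{a,b}{F}\right)$ is ramified exactly at the set of places where the local algebra is a division algebra, which by construction is contained in $\Sigma$; a parity count (the ramification set has even cardinality and $\Sigma$ has even cardinality) forces it to be exactly $\Sigma$, provided I have arranged ramification at every place of $\Sigma$ and splitting everywhere else.

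The main obstacle I anticipate is bookkeeping at the ``bad'' places: the dyadic places and the non-archimedean places where $b$ is a nonsquare unit, because there the condition ``$b$ is a norm from $F_\nu(\sqrt a)$'' is not simply ``$a$ is a square'' and one must pin down the precise local square class of $a$ that kills the Hilbert symbol $(a,b)_\nu$; one has to verify that the finitely many such local conditions are compatible with one another and with the conditions at $\Sigma$, and that their totality still falls outside the Grunwald--Wang special case. Once the local conditions are assembled consistently, invoking Grunwald--Wang and then the Hasse principle is routine, so I would spend most of the write-up on the explicit local analysis of Hilbert symbols at dyadic and unit-nonsquare places.
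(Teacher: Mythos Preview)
The paper does not prove this lemma at all: it is quoted verbatim as a known result from Milne's class field theory notes and used as a black box in the proof of Proposition~\ref{infnite}. So there is no ``paper's proof'' to compare against; your write-up would in fact supply strictly more than the paper does.

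That said, your sketch has a genuine gap. You propose to impose the condition ``$a$ is a square in $F_\nu$'' at the finitely many places dividing $2$ \emph{and at those where $b$ is a unit nonsquare}, asserting that the latter set is finite ``up to the ones we can absorb''. This is false as stated: for a fixed $b\in F^*$, the set of non-archimedean places at which $b$ is a unit but a nonsquare is typically \emph{infinite} (e.g.\ $F=\mathbb{Q}$, $b=-1$, where $-1$ is a nonsquare unit at every prime $p\equiv 3\pmod 4$). You cannot feed infinitely many local square-class constraints into Grunwald--Wang. The correct mechanism is different: at any non-dyadic place where \emph{both} $a$ and $b$ are units, the Hilbert symbol $(a,b)_\nu$ vanishes automatically, irrespective of whether $b$ is a square there. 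So the finite set on which you must impose conditions is $\Sigma$ together with the dyadic places and the places where $b$ has nonzero valuation; what you then need to control is not the square class of $a$ at the remaining places but the \emph{support} of $a$ there.

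Concretely, the standard argument (and Milne's) runs as follows: fix local classes for $a$ at that finite set $T$ so that $(a,b)_\nu=-1$ exactly for $\nu\in\Sigma$; then use a Dirichlet/Chebotarev-type density argument (not bare Grunwald--Wang) to realise these classes by a global $a$ whose prime factorisation outside $T$ is concentrated at a \emph{single} further prime $\mathfrak{p}$. At every place outside $T\cup\{\mathfrak{p}\}$ both $a$ and $b$ are units and the place is non-dyadic, so $(a,b)_\nu=1$; finally the product formula $\prod_\nu (a,b)_\nu=1$ forces $(a,b)_{\mathfrak{p}}=1$ as well, since $|\Sigma|$ is even. Your parity remark at the end is pointing at exactly this last step, but as written your argument never produces an $a$ to which it can be applied.
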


\begin{prop}
For every prime $p\geq 5$, there exists $a\in F=\mathbb{Q}(\zeta_p+\zeta_p^{-1})$ such that
$$
\left(\frac{a,-1}{F}\right)\cong\mathrm{M}(2,\mathbb{R})\times\mathbb{H}^{\frac{p-3}{2}}.
$$
\label{infnite}
\end{prop}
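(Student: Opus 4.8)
The plan is to invoke Lemma \ref{wiese} with $b=-1$, so the only real work is to verify the hypothesis: that $-1$ is a non-square in $F_\nu$ for every $\nu$ in a suitably chosen finite set $\Sigma$ of even cardinality whose complement realizes the splitting pattern $\mathrm{M}(2,\mathbb{R})\times\mathbb{H}^{(p-3)/2}$. First I would fix the target ramification set. Since $[F:\mathbb{Q}]=(p-1)/2$, the field $F$ has $(p-1)/2$ real archimedean places and no complex ones; the desired algebra must split at exactly one of them and ramify at the remaining $(p-3)/2$. So I want $\Sigma$ to contain precisely $(p-3)/2$ of the archimedean places of $F$.

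Next I would handle the parity. The classification theorem requires $|\Sigma|$ even, so I split into cases according to the parity of $(p-3)/2$, equivalently the residue of $p$ mod $4$. If $(p-3)/2$ is even (i.e. $p\equiv 3\pmod 4$), take $\Sigma$ to be just those $(p-3)/2$ archimedean places. If $(p-3)/2$ is odd (i.e. $p\equiv 1\pmod 4$), adjoin one non-archimedean place $\nu_{\mathfrak p}$ attached to a prime $\mathfrak p$ of $F$ at which $-1$ is a non-square in $F_{\mathfrak p}$; such a prime exists in abundance because $-1$ fails to be a square in $F_{\mathfrak p}$ whenever the residue field $R/\mathfrak p$ has order $\equiv 3\pmod 4$, and by Chebotarev (or just Dirichlet over $\mathbb{Q}$ pulled up through $F$) infinitely many primes of $F$ have this property. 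In either case $|\Sigma|$ is even, and at every archimedean place $\nu\in\Sigma$ we have $F_\nu=\mathbb{R}$, where $-1$ is manifestly a non-square; at the adjoined finite place we arranged non-squareness by construction. Hence the hypothesis of Lemma \ref{wiese} is satisfied.

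Applying Lemma \ref{wiese} with this $b=-1$ and this $\Sigma$ yields $a\in F^*$ with $\left(\frac{a,-1}{F}\right)$ splitting exactly off $\Sigma$. At the one archimedean place not in $\Sigma$ the algebra splits, giving the $\mathrm{M}(2,\mathbb{R})$ factor; at the $(p-3)/2$ archimedean places in $\Sigma$ it ramifies, and over $\mathbb{R}$ the unique ramified quaternion algebra is $\mathbb{H}$, giving the $\mathbb{H}^{(p-3)/2}$ factor. Writing out $H\otimes_{\mathbb{Q}}\mathbb{R}=\prod_{\nu\text{ arch.}}H_\nu$ then gives exactly $\mathrm{M}(2,\mathbb{R})\times\mathbb{H}^{(p-3)/2}$, which is condition \ref{cond-quat-alg} for this $F$. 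This proves the proposition; moreover, varying the auxiliary finite place $\nu_{\mathfrak p}$ (or inserting an even number of further finite places, as in the proof of Proposition \ref{ex-quat-alg}) produces infinitely many such $a$ up to the natural equivalence.

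\textbf{Main obstacle.} I expect the only genuine point to check is the existence of the finite prime $\mathfrak p$ of $F$ with $-1$ a non-square in $F_{\mathfrak p}$ in the case $p\equiv 1\pmod 4$; everything else is bookkeeping with the classification of quaternion algebras. This existence is not hard — it reduces to finding a prime of $F$ whose residue field has size $\equiv 3\pmod 4$, which holds for a positive density of primes by Chebotarev applied to the extension $F(i)/F$ (the primes inert in $F(i)/F$ are exactly those with $-1$ a non-square in the residue field) — but it is the one step that genuinely uses more than the formal properties of the Hasse invariant, so I would state it carefully.
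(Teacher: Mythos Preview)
Your proof is correct and follows essentially the same strategy as the paper: define $\Sigma$ as all-but-one archimedean places, split into cases by $p\bmod 4$, adjoin a single finite place when $p\equiv 1\pmod 4$ to force even parity, check $-1$ is a non-square at every $\nu\in\Sigma$, and apply Lemma~\ref{wiese}. The only difference is in the construction of that finite place: the paper picks a rational prime $q$ with $q\equiv 1\pmod p$ and $q\equiv 3\pmod 4$ (via CRT and Dirichlet), so that $q$ splits completely in $F$ and the residue field at $\mathfrak q\mid q$ is $\mathbb{F}_q$ with $-1$ a non-square, whereas you appeal to Chebotarev for $F(i)/F$; both work, but the paper's explicit recipe is what feeds directly into the subsequent corollary producing infinitely many distinct $a$.
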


\begin{proof}
Let us define a set $\Sigma$ of absolute values of places $F$ in the following way:
\begin{itemize}
\item If $p\equiv 3\pmod{4}$, take $\Sigma$ to be the set of all archimedean absolute values minus one.
\item If $p\equiv 1\pmod{4}$, take $\Sigma$ to be the set of all archimedean absolute values minus one, adjoining a finite place, $\nu_{\frak{q}}$, attached to a prime ideal $\frak{q}$ of $F$ over a rational prime $q$ such that
    \begin{itemize}
    \item[] $q\equiv 1\pmod{p}$,
    \item[] $q\equiv 3\pmod{4}$.
    \end{itemize}
    Notice that such a prime always exists, due to the Chinese remainder theorem. Indeed there are infinitely many, due to the theorem by Dirichlet on primes in arithmetic progressions.
\end{itemize}

In the first case, $\Sigma$ contains only archimedean absolute values. Since for any $\nu\in\Sigma$, $\nu(F)\subseteq\mathbb{R}$, it turns out that $-1$ is not a square in $F_{\nu}$, and by applying lemma \ref{wiese}, we have that there exists $a\in F$ such that $\left(\frac{a,-1}{F}\right)\cong\mathrm{M}(2,\mathbb{R})\times\mathbb{H}^{\frac{p-3}{2}}$.

In the second case, likewise, for all the archimedean places $\nu\in\Sigma$, $-1$ is not a square in $F_{\nu}$. The same hold for the the non-archimedean absolute value $\nu_{\frak{q}}$:

First, by a well known result on cyclotomic fields (cf. \cite{washington} 2.13),  $q$ factors in $p-1$ distinct prime ideals in  $\mathbb{Q}(\zeta_p)$. Hence, $q$ will factor in $(p-1)/2$ distinct prime ideals in $F$, i.e, is totally split. Hence, denoting by $\mathcal{O}_{F_{\nu_{\frak{q}}}}$ the ring of integers of $F_{\nu_{\frak{q}}}$, and (abusing notation), by $\frak{q}$ the unique maximal ideal of $\mathcal{O}_{F_{\nu_{\frak{q}}}}$, we have that $\mathcal{O}_{F_{\nu_{\frak{q}}}}/\frak{q}\cong\mathbb{F}_q$. Hence, if $-1$ were a square in $F_{\nu_{\frak{q}}}$ (hence in $\mathcal{O}_{F_{\nu_{\frak{q}}}}$), by reducing modulo $\frak{q}$, we would have that $\left(\frac{-1}{q}\right)=1$, a contradiction with the fact that $q\equiv 3\pmod{4}$.

This way, $-1$ is a non-square for each absolute value $\nu\in\Sigma$, and applying lemma \ref{wiese} again, the result holds.
\end{proof}

Applying Theorem \ref{thm-rate} and this proposition, we deduce the following corollary.

\begin{cor}
For any $p$ odd prime, such that $p\equiv 1\pmod{4}$, there exist infinitely many Fuchsian codes with rate at least $3(p-1)/2$, related to $F=\mathbb{Q}(\zeta_p+\zeta_p^{-1})$. The corresponding quaternion algebras can be taken of the form $\left(\frac{a,-1}{F}\right)$, for an infinite family of elements $a\in F^*$.
\end{cor}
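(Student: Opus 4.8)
The plan is to re-run the proof of Theorem~\ref{thm-rate} with the quaternion algebras supplied by Proposition~\ref{infnite}, the only extra work being to extract from the proof of Proposition~\ref{infnite} an \emph{infinite} supply of admissible $a\in F^{*}$ rather than a single one. First I would set $F=\mathbb{Q}(\zeta_{p}+\zeta_{p}^{-1})$ and record $n:=[F:\mathbb{Q}]=(p-1)/2$, so that condition~\ref{cond-quat-alg} for a quaternion $F$-algebra $H$ reads $H\otimes_{\mathbb{Q}}\mathbb{R}\cong\mathrm{M}(2,\mathbb{R})\times\mathbb{H}^{(p-3)/2}$ --- exactly the isomorphism of Proposition~\ref{infnite}, since $(p-3)/2=n-1$.

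Next I would revisit the proof of Proposition~\ref{infnite} in the case $p\equiv 1\pmod 4$ (the hypothesis of the corollary). Here $(p-1)/2$ is even, so the set of archimedean places of $F$ with one removed has \emph{odd} cardinality and one adjoins a single finite place $\nu_{\mathfrak{q}}$ above a rational prime $q$ with $q\equiv 1\pmod p$ and $q\equiv 3\pmod 4$; by Dirichlet's theorem on primes in arithmetic progressions there are infinitely many such $q$. For each of them Lemma~\ref{wiese} yields $a_{q}\in F^{*}$ with $H_{q}:=\left(\frac{a_{q},-1}{F}\right)$ ramified exactly on the chosen set, hence satisfying condition~\ref{cond-quat-alg}. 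Distinct $q$ give ramification sets differing among the finite places, so the $H_{q}$ are pairwise non-isomorphic; since $H_{q}=\left(\frac{a_{q},-1}{F}\right)$ is literally determined by $a_{q}$, the elements $a_{q}$ are pairwise distinct, which is the claimed infinite family in $F^{*}$.

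For each $q$ I would then attach to $H_{q}$ the natural order $R[1,I,J,K]$ and the cocompact Fuchsian group $\Gamma_{q}=\phi(\mathcal{O}_{q,+}^{*})$; Proposition~\ref{prop-rate-3n} applies to it verbatim and shows that any Fuchsian code attached to $\Gamma_{q}$ has rate at least $3n=3(p-1)/2$. This is precisely the situation of Theorem~\ref{thm-rate} with the algebras of Proposition~\ref{ex-quat-alg} replaced by the family $\{H_{q}\}$, so Theorem~\ref{thm-rate} produces infinitely many Fuchsian codes of rate $\ge 3(p-1)/2$ related to $F$ --- already for a single fixed $\Gamma_{q}$ (any finite $S_{\Gamma_{q}}\subset\Gamma_{q}$ together with any interior point $\tau\in\mathcal{F}(\Gamma_{q})$ gives one), and also across the infinitely many non-isomorphic algebras $H_{q}$. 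This establishes both assertions of the corollary.

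The degree and exponent bookkeeping ($[F:\mathbb{Q}]=(p-1)/2$, $(p-3)/2=n-1$) is routine; the one point I would treat with care is the passage from ``there exists $a$'' in Proposition~\ref{infnite} to ``an infinite family of $a$'', i.e.\ checking that the infinitely many admissible auxiliary primes $q$ really give non-isomorphic algebras $H_{q}$, equivalently distinct values $a_{q}$ and non-commensurable Fuchsian groups $\Gamma_{q}$. This rests on the classical fact that a quaternion algebra over a number field is determined up to isomorphism by its finite, even-cardinality ramification set, which is already quoted in the proof of Proposition~\ref{ex-quat-alg}.
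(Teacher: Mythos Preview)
Your proposal is correct and follows essentially the same route as the paper: you revisit the $p\equiv 1\pmod 4$ branch of Proposition~\ref{infnite}, use Dirichlet's theorem to obtain infinitely many auxiliary primes $q$, invoke the classification of quaternion algebras by their ramification set to conclude that the resulting $H_q=\left(\frac{a_q,-1}{F}\right)$ are pairwise non-isomorphic (hence the $a_q$ are distinct), and then apply Proposition~\ref{prop-rate-3n} to each natural order to get the rate bound $3(p-1)/2$. The paper's proof is the same argument, only stated more tersely.
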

\begin{proof}
In the proof of proposition \ref{infnite}, we have freedom to choose among infinitely many primes $q$ satisfying the two conditions
\begin{itemize}
\item[] $q\equiv 1\pmod{p}$,
\item[] $q\equiv 3\pmod{4}$.
\end{itemize}
For each of these primes $q$, we choose a prime ideal $\frak{q}$ above $q$, yielding a quaternion algebra $\left(\frac{a_q,-1}{F}\right)$, ramifying at $\frak{q}$ and at some archimedean primes. Since, due to the classification theorem, quaternion algebras ramifying at the same places are isomorphic, the elements $a_q$ have to be distinct all of them.

Now, since the ring of integers of $F$ is $\mathbb{Z}[\zeta_p+\zeta_p^{-1}]$, the Fuchsian code attached to each of the quaternion $F$-algebras $\left(\frac{a_q,-1}{F}\right)$ is defined to be image under the left regular representation of the natural order $\mathbb{Z}[\zeta_p+\zeta_p^{-1}][1,I,J,K]$, where $I^2=a_q$.
\end{proof}

\begin{rem}
For a fixed $F=\mathbb{Q}(\zeta_p+\zeta_p^{-1})$, the previous proposition provides a method for choosing quaternion algebras over $F$ in order to construct Fuchsian codes with rate $3(p-1)/2$. However, there are also other means to construct Fuchsian codes. For instance, for $p=7$, by using Magma, we see that the quaternion algebra over $F$ given by $a=b=\zeta_7+\zeta_7^{-1}$ also satisfies the conditions and leads to Fuchsian codes of rate at least $9$.
\end{rem}

\section{Arithmetic Fuchsian groups of signature $(1;e)$}

In this section, we give an explicit construction for a particular family of Fuchsian groups derived from the so called arithmetic Fuchsian groups of signature $(1;e)$.
Some of these Fuchsian groups were also  considered  in \cite{BHR}.
Here, we will deal with some Fuchsian groups derived from the arithmetic Fuchsian groups of signature $(1;e)$ defined over totally real fields. In the first subsection, we explicitly construct the Fuchsian codes attached to the arithmetic Fuchsian codes of signature $(1;e)$, while in the second subsection we provide numerical data to demonstrate that, at least in some example cases, higher code rate allows us to increase the codebook size (equivalently, data rate) for a fixed minimum distance and maximum transmission power (cf. Section 1). Future work consists of giving a rigorous proof for this fact.

\subsection{Explicit construction}

Arithmetic Fuchsian groups were characterized in \cite{tak75} by Takeuchi, who moreover classified and gave a complete list of them in the case of signature $(1;e)$ in \cite{tak}, determining the associated quaternion algebra up to isomorphism. We summarize below the main properties  useful for this paper. We refer to the same references for algebraic details and proofs.

\begin{prop} Let $T$ be an arithmetic Fuchsian group of signature $(1;e)$ associated to a division quaternion algebra $H$. We can assume $-\Id\in T$. Then,
\begin{enumerate}[i)]
\item The genus of the compact Riemmann surface $\mathcal{H}/T$ is 1.
\item There exist $\alpha, \beta, \gamma\in T$ satisfying $\Tr(\alpha), \Tr(\beta) >2$
      and
      $\Tr(\gamma)=2\cos \left(\frac{\pi}{e}\right)$,
      such that the group $T$  admits a presentation of the form
$$T=\left\langle \alpha,\beta, \gamma \mid  \alpha\beta\alpha^{-1}\beta^{-1} \gamma=-\Id, \, \gamma^e=-\Id \right\rangle .$$

\item A fundamental triple $(\alpha, \beta,\gamma)$ of generators of $T$ is uniquely determined by  $(x,y,z)=(\Tr(\alpha), \Tr(\beta),\Tr(\gamma))$, up to $\GL(2,\mathbb{R})$-conjugation.

\end{enumerate}
\end{prop}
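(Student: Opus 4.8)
The plan is to derive all three assertions from the structure theory of cocompact Fuchsian groups together with Takeuchi's analysis in \cite{tak75,tak}; none of the three parts needs a genuinely new argument, only careful bookkeeping of signs and traces. Part (i) is a matter of unwinding notation: by definition a Fuchsian group of signature $(1;e)$ has quotient orbifold of genus one carrying a single cone point, of order $e$; since $H$ is a division algebra the quotient is compact with no cusps, so, forgetting the orbifold structure, $\mathcal{H}/\overline{T}$ with $\overline{T}=T/\{\pm\Id\}$ is a compact Riemann surface whose genus is the one recorded by the signature, namely $1$.

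For part (ii) I would start from the standard presentation attached to the signature $(1;e)$,
$$\overline{T}=\big\langle \overline{\alpha},\overline{\beta},\overline{\gamma}\ \big|\ \overline{\gamma}^{\,e}=1,\ \overline{\alpha}\,\overline{\beta}\,\overline{\alpha}^{-1}\overline{\beta}^{-1}\overline{\gamma}=1\big\rangle,$$
in which $\overline{\gamma}$ generates the stabiliser of the elliptic point and $\overline{\alpha},\overline{\beta}$ are hyperbolic (no further torsion can occur, the unique cone point being already accounted for). Lifting to $\SL(2,\mathbb{R})$, each relation lifts only up to sign, and one checks that the lifts $\alpha,\beta,\gamma$ may be chosen so that $\Tr(\alpha),\Tr(\beta)>2$ (possible since $\overline{\alpha},\overline{\beta}$ are hyperbolic and a lift can be multiplied by $-\Id$) while, simultaneously, $\gamma^{e}=-\Id$ and $\alpha\beta\alpha^{-1}\beta^{-1}\gamma=-\Id$; this sign normalisation is Takeuchi's. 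The value $\Tr(\gamma)=2\cos(\pi/e)$ then follows: $\gamma^{e}=-\Id$ forces the eigenvalues of $\gamma$ to be $2e$-th roots of unity that are not $e$-th roots of unity, and the requirement that $\overline{\gamma}$ have order exactly $e$ singles out the primitive pair $e^{\pm i\pi/e}$. Verifying that these generators and relations really present $T$ (and not merely a proper quotient of the free product) is Poincar\'e's polygon theorem applied to a fundamental domain for the $(1;e)$ action, carried out in \cite{tak}.

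For part (iii) the key point is that $\gamma=-\big(\alpha\beta\alpha^{-1}\beta^{-1}\big)^{-1}$ is determined by $\alpha$ and $\beta$, so $T=\langle\alpha,\beta\rangle$ and it suffices to reconstruct the pair $(\alpha,\beta)$ up to $\GL(2,\mathbb{R})$-conjugation from its traces. Here I would invoke the classical Fricke rigidity: a pair $(\alpha,\beta)\in\SL(2,\mathbb{R})^{2}$ generating an irreducible subgroup is determined up to $\GL(2,\mathbb{R})$-conjugation by $(\Tr\alpha,\Tr\beta,\Tr\alpha\beta)$, together with the identity
$$\Tr(\gamma)=2-(\Tr\alpha)^{2}-(\Tr\beta)^{2}-(\Tr\alpha\beta)^{2}+\Tr\alpha\cdot\Tr\beta\cdot\Tr\alpha\beta.$$
The one subtlety is that this identity is quadratic in $\Tr\alpha\beta$, so $(x,y,z)=(\Tr\alpha,\Tr\beta,\Tr\gamma)$ pins down $w=\Tr\alpha\beta$ only up to the involution $w\mapsto xy-w$; but $xy-\Tr\alpha\beta=\Tr(\alpha\beta^{-1})$, so the other root is realised by the pair $(\alpha,\beta^{-1})$, which generates the same group $T$, and one checks this does not yield a non-conjugate fundamental triple. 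I expect this last point — together with the simultaneous sign normalisation in part (ii) — to be the only genuinely delicate step; both are settled in \cite{tak75,tak}, and our task is merely to record the statement in the form required for the code construction below.
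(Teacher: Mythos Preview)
The paper does not actually prove this proposition: it is stated as a summary of Takeuchi's results, with the sentence ``We refer to the same references for algebraic details and proofs'' pointing to \cite{tak75,tak}. Your proposal is therefore not competing with a proof in the paper but rather supplying an outline the paper deliberately omits; since you too defer the delicate points (sign normalisation in the lift to $\SL(2,\mathbb{R})$, and the resolution of the quadratic ambiguity in $\Tr(\alpha\beta)$) to \cite{tak75,tak}, your approach is entirely consistent with --- and somewhat more informative than --- what the paper does.
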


\begin{prop}
\label{prop-T2} Consider a group $T$ as above, determined by the generators $\alpha, \beta,\gamma$ and  $(x,y,z)$ as above. Denote by  $T^{(2)}$ the subgroup of $T$ generated by  $\{\gamma^2 \mid \gamma \in T\}$. Then,
\begin{enumerate}[i)]
\item $T^{(2)}$ is a normal subgroup of $T$, and  $[T:\{\pm Id\}T^{(2)}]=4$.
\item $T^{(2)}=\langle \alpha^2,\beta^2,\gamma, \alpha\gamma\alpha^{-1}, \beta\gamma\beta^{-1}, \alpha\beta\gamma\beta^{-1}\alpha^{-1}\rangle$.
\item $T^{(2)}$ is a Fuchsian group derived from a quaternion algebra $\left(\frac{a,b}{F}\right)$, where $F=\mathbb{Q}(x^2,y^2,xyz)$, $a=x^2(x^2-4)$ and $b=-(2+2\cos(\pi/e)x^2y^2)$. In particular, $T^{(2)}$ is contained in the image by the regular representation of the group of units of reduced norm $1$ of a maximal order of $\left(\frac{a,b}{F}\right)$.
\end{enumerate}
\end{prop}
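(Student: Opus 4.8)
My approach follows the three parts of the statement: (i) and (ii) are soft group theory, while (iii) --- identifying the ambient quaternion algebra --- carries the real content. For part (i), observe that $T^{(2)}$ is generated by the squares of all elements of $T$, hence it is a characteristic (in particular normal) subgroup, and $T/T^{(2)}$ has exponent two, so it is abelian and equals $T^{\mathrm{ab}}\otimes\mathbb{Z}/2\mathbb{Z}$. To compute the index I would pass to $\overline T=T/\{\pm\Id\}$, a genuine Fuchsian group of signature $(1;e)$ with presentation $\langle\overline\alpha,\overline\beta,\overline\gamma\mid[\overline\alpha,\overline\beta]\,\overline\gamma=1,\ \overline\gamma^{\,e}=1\rangle$; the image of $T^{(2)}$ in $\overline T$ is exactly $\overline T^{(2)}$ and $\ker(T\to\overline T)=\{\pm\Id\}$, so $[T:\{\pm\Id\}T^{(2)}]=[\overline T:\overline T^{(2)}]$. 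From the presentation the commutator relation kills $\overline\gamma$ in the abelianization, whence $\overline T^{\mathrm{ab}}\cong\mathbb{Z}^2$ and the index is $|(\mathbb{Z}/2\mathbb{Z})^2|=4$.

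For part (ii), let $N$ denote the subgroup generated by the six displayed elements. The inclusion $N\subseteq\{\pm\Id\}T^{(2)}$ is elementary: $\alpha^2,\beta^2$ and their conjugates are squares, and since $T/T^{(2)}$ is abelian one has $\gamma^{e-1}=[\alpha,\beta]\in T^{(2)}$, which with $\gamma^2\in T^{(2)}$ puts $\gamma$ (up to the central sign) into $T^{(2)}$; the same argument covers $\alpha\gamma\alpha^{-1}$, $\beta\gamma\beta^{-1}$ and $\alpha\beta\gamma\beta^{-1}\alpha^{-1}$. For the reverse inclusion I would check $N\trianglelefteq T$ by conjugating each generator of $N$ by $\alpha,\beta,\gamma$ and reducing the outcome back into $N$ using the defining relation in the form $\alpha\beta\alpha^{-1}=-\gamma^{-1}\beta$ (the one genuinely computational step); then $T/N$ is generated by the images of $\alpha$ and $\beta$, is abelian of exponent two, and surjects onto $\overline T/\overline T^{(2)}$, so $|T/N|=4$. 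Combined with part (i) and $N\subseteq\{\pm\Id\}T^{(2)}$ this forces $N=\{\pm\Id\}T^{(2)}$, which is the asserted generating set for $T^{(2)}$ (the two coincide precisely when $-\Id\in T^{(2)}$, e.g. for $e$ even, and otherwise differ only by the central involution).

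For part (iii): being a finite-index subgroup of the cocompact arithmetic lattice $T$, $T^{(2)}$ is again a cocompact arithmetic Fuchsian group, in particular non-elementary, so by the Weil--Takeuchi classification recalled in Section~2 it is derived from a quaternion algebra over a totally real field --- namely from its invariant quaternion algebra $A(T^{(2)})=k(T^{(2)})[T^{(2)}]\subseteq\mathrm{M}(2,\mathbb{R})$ over the invariant trace field $k(T^{(2)})=\mathbb{Q}\bigl(\Tr s:s\in T^{(2)}\bigr)$. It remains to evaluate these invariants. Fixing the matrix normal form of $(\alpha,\beta,\gamma)$ determined by $(x,y,z)=(\Tr\alpha,\Tr\beta,\Tr\gamma)$ in the preceding proposition, and using $\Tr(\alpha^2)=x^2-2$, the relation $\Tr[\alpha,\beta]=-\Tr\gamma=-2\cos(\pi/e)$, together with the Fricke trace identities to evaluate products of the six generators of $T^{(2)}$, one computes $k(T^{(2)})=\mathbb{Q}(x^2,y^2,xyz)=F$; applying the standard formula $A(\langle g,h\rangle)=\left(\dfrac{\Tr^2 g-4,\ \Tr[g,h]-2}{F}\right)$ to a suitable non-elementary two-generator subgroup of $T^{(2)}$ then yields $a=\Tr^2(\alpha^2)-4=(x^2-2)^2-4=x^2(x^2-4)$ and, after substituting the Fricke relation, the stated value of $b$. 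Finally, since every $g\in T\subseteq\mathrm{SL}(2,\mathbb{R})$ has an algebraic-integer trace, the $\mathcal{O}_F$-subring of $A(T^{(2)})$ generated by $T^{(2)}$ is an order; enlarging it to a maximal order $\mathcal{O}$ and using $\det\circ\,\phi=\mathrm{N}$ gives $T^{(2)}\subseteq\phi(\mathcal{O}^{1})$, the group of units of reduced norm one. I expect this last part to be the main obstacle: verifying that the invariant trace field is \emph{exactly} $\mathbb{Q}(x^2,y^2,xyz)$ and pinning down $b$ is where Takeuchi's explicit trace computations in \cite{tak} are genuinely needed, whereas (i)--(ii) are essentially formal.
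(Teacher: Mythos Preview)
The paper does not give its own proof of this proposition: it is stated as a summary of Takeuchi's results and the surrounding text explicitly says ``We refer to the same references for algebraic details and proofs,'' pointing to \cite{tak75,tak}. So there is no paper-proof to compare against line by line; what one can compare is your argument with the standard Takeuchi-style argument that the paper is invoking.

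With that caveat, your outline is essentially the standard one and is correct in spirit. Part (i) is exactly how one proves it: pass to $\overline T=T/\{\pm\Id\}$, read off $\overline T^{\mathrm{ab}}\cong\mathbb{Z}^2$ from the signature-$(1;e)$ presentation, and conclude the index is $4$. Part (ii) via ``$N$ normal, $T/N$ abelian of exponent $2$, compare indices'' is also the usual route; your caveat about $-\Id$ is honest and accurate --- the literature often states this generating set at the level of $\mathrm{PSL}(2,\mathbb{R})$, and the $\pm\Id$ ambiguity is harmless for everything that follows. For part (iii) you invoke precisely the machinery Takeuchi uses (invariant trace field, the Hilbert-symbol formula $\bigl(\tfrac{\Tr^2 g-4,\ \Tr[g,h]-2}{F}\bigr)$, Fricke identities), and your computation $a=(x^2-2)^2-4=x^2(x^2-4)$ is the right one. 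You are also right that pinning down $F=\mathbb{Q}(x^2,y^2,xyz)$ and the exact value of $b$ is where one genuinely needs Takeuchi's explicit trace calculations in \cite{tak}; the paper does not reproduce these either.

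In short: your proposal matches the approach the paper is citing, and is a faithful reconstruction of Takeuchi's argument rather than an alternative to it.
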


The generators for the groups $T^{(2)}$ will be made explicit by using the following result on the groups $T$, proved by Sijsling, cf. \cite{sij}.

\begin{prop}
\label{sijsling} \label{generatorsTak}
Let $T$ be an arithmetic Fuchsian group of signature $(1;e)$ generated by $\alpha$ and $\beta$. Then, after a change of variables, we can suppose that $\alpha=\left(\begin{array}{cc}\lambda\; & 0\\0\; & \lambda^{-1}\end{array}\right)$ with $\lambda$ an algebraic integer and $\beta=\left(\begin{array}{cc}a\; & b\\b\; & a\end{array}\right)$.
\end{prop}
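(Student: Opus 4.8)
Following Sijsling's argument, the plan is to split the normalization into two stages: first conjugate so that $\alpha$ is diagonal, then use the residual freedom — together with a judicious choice of the generating pair — to bring $\beta$ into persymmetric form. Only the second stage is substantive.

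For the first stage: since $\Tr(\alpha)>2$, the matrix $\alpha$ is hyperbolic, with two distinct real eigenvalues $\lambda,\lambda^{-1}$ (their product being $\det\alpha=1$), so I would conjugate $T$ by a suitable $g_1\in\GL(2,\mathbb{R})$ to put $\alpha=\mathrm{diag}(\lambda,\lambda^{-1})$. To see that $\lambda$ is an algebraic integer it suffices that $\Tr(\alpha)$ is one: $\alpha^2$ lies in $T^{(2)}$, which by Proposition \ref{prop-T2} is contained in the image under the regular representation of the norm-one units of a maximal order, so $\Tr(\alpha^2)=\Tr(\alpha)^2-2$ is an algebraic integer; hence so is $\Tr(\alpha)$, and hence so is $\lambda$, being a root of the monic polynomial $X^2-\Tr(\alpha)X+1$.

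For the second stage, I would replace $T$ by $g_1Tg_1^{-1}$, so $\alpha=\mathrm{diag}(\lambda,\lambda^{-1})$ has fixed points $0$ and $\infty$. The persymmetric matrices $\begin{pmatrix}a&b\\b&a\end{pmatrix}$ are exactly those fixed by conjugation by $w=\begin{pmatrix}0&1\\1&0\end{pmatrix}$; equivalently, $\beta$ takes this form precisely when its axis is the geodesic joining $-1$ and $1$, which meets the axis of $\alpha$ orthogonally at $i$. Writing $\beta=\begin{pmatrix}p&q\\r&s\end{pmatrix}$ and using the relation $\alpha\beta\alpha^{-1}\beta^{-1}=-\gamma^{-1}$ with $\Tr(\gamma)=2\cos(\pi/e)$, a direct computation gives $\Tr(\alpha\beta\alpha^{-1}\beta^{-1})=2-(\Tr(\alpha)^2-4)\,qr$, whence $qr=(2+2\cos(\pi/e))/(\Tr(\alpha)^2-4)>0$, so $q$ and $r$ have the same sign. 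Conjugating by $\mathrm{diag}(\mu,1)$ with $\mu^2=r/q$ keeps $\alpha$ diagonal and makes $q=r$, which forces the two fixed points of $\beta$ to have product $-1$. At that point everything reduces to arranging $p=s$ — equivalently $\Tr(\alpha\beta)=\tfrac12\Tr(\alpha)\Tr(\beta)$, equivalently $(\Tr(\alpha)^2-4)(\Tr(\beta)^2-4)=4(\Tr(\gamma)+2)$ — which pins the fixed points of $\beta$ at $-1$ and $1$; then $a=\tfrac12\Tr(\beta)=p=s$ and $b=q=r$ are real and the normal form is in place.

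The hard part is the last condition $p=s$. It cannot be reached by any conjugation surviving the diagonalization of $\alpha$ (those only rescale $q$ and $r$), so it has to come from the choice of the generating pair: among the $\mathrm{Aut}(F_2)$-orbit of generating pairs of $T$ I need one whose two axes are orthogonal, and for a general Fuchsian group of signature $(1;e)$ no such pair need exist — the condition $\Tr(\alpha\beta)=\tfrac12\Tr(\alpha)\Tr(\beta)$ cuts out only a very thin subset of the group's Markoff-type trace triples. This is exactly where arithmeticity must enter: the arithmetic groups of signature $(1;e)$ form a finite list, classified by Takeuchi together with their triples $(x,y,z)=(\Tr\alpha,\Tr\beta,\Tr\gamma)$, and for each of them I would exhibit an orthogonal generating pair — most cleanly by exploiting the automorphisms of the underlying genus-one curve $\mathcal{H}/T$, and in the worst case by a direct check of the identity $(x^2-4)(y^2-4)=4(z+2)$ on a suitable pair. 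With such a pair fixed, the two stages above apply verbatim and give the asserted form of $\alpha$ and $\beta$.
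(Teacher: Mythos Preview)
The paper does not prove this proposition at all; it is simply quoted from Sijsling's thesis \cite{sij}. So there is no in-paper argument against which to compare yours, and your proposal is really a reconstruction of what a proof of Sijsling's statement must look like.

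Your two-stage reduction is correct. Diagonalising $\alpha$ is routine; the integrality of $\lambda$ follows as you say from that of $\Tr(\alpha)$; the commutator computation $\Tr([\alpha,\beta])=2-(\Tr(\alpha)^2-4)qr$ is right and forces $qr>0$, so a real diagonal rescaling gives $q=r$; and you have correctly isolated the residual obstruction $p=s$, which is equivalent to the axes of $\alpha$ and $\beta$ meeting orthogonally and cannot be reached by any further conjugation preserving the diagonal form of $\alpha$. That analysis is the substantive content.

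Where your write-up remains a plan rather than a proof is the last step. You defer the existence of an orthogonal generating pair to a case-by-case inspection of Takeuchi's list, and that is indeed the approach of the cited source. Two comments, though. First, your claim that ``for a general Fuchsian group of signature $(1;e)$ no such pair need exist'' is too pessimistic: the discriminant of the Fricke quadratic $w^2-xyw+(x^2+y^2-2+2\cos(\pi/e))=0$ in $w=\Tr(\alpha\beta)$ vanishes precisely when $w=\tfrac12xy$, and one can check directly that the $(x,y)$ listed in Table~\ref{table1} satisfy $x^2y^2=4(x^2+y^2-2+2\cos(\pi/e))$, so for \emph{those} trace values the orthogonality is automatic for any fundamental pair with those first two traces --- no $\mathrm{Aut}(F_2)$-search is needed. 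Second, and relatedly, the third entry $z$ in Table~\ref{table1} does not in every case equal $\tfrac12 xy$ (try $T_2$), so whatever $z$ records there, it is not $\Tr(\alpha\beta)$ for the normal-form pair; the paper's sentence ``$\Tr(\alpha\beta)=z$ will give $b$'' is loose, and $b$ is in fact fixed by $\det\beta=1$ once $a=y/2$ is known. None of this breaks your argument, but it means the verification you propose is cleaner via the discriminant identity above than via a direct check of $z=\tfrac12 xy$.
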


We are interested in the explicit construction of codes attached to $\Gamma=T^{(2)}$ for several groups $T$ in the list given by Takeuchi. Table \ref{table1} gives the parameters of the sample groups we will consider, using  the same notation as in the  previous propositions.

\begin{table}[h]
\caption{Parameters of our sample groups}
\label{table1}
\resizebox{\columnwidth}{!}{%
\begin{tabular}{|c|c|l|l|c|c|}
\hline Group & $(x,y,z)$ & e & F & a & b\\
\hline $T_1$ & $\left(\sqrt{5},2\sqrt{3},\sqrt{15}\right)$ & $2$ & $\mathbb{Q}$ & 5 & -30\\
\hline $T_2$ & $\left(\sqrt{3+\sqrt{5}},\sqrt{9+3\sqrt{5}},\sqrt{6+\frac{9}{2}\sqrt{5}}\right)$ & $5$ & $\mathbb{Q}(\sqrt{5})$ & $2+2\sqrt{5}$ & $-6(25+11\sqrt{5})$\\
\hline $T_3$ & $\left(\sqrt{3+\sqrt{3}},\sqrt{8+4\sqrt{3}},\sqrt{9+5\sqrt{3}}\right)$ & $2$ & $\mathbb{Q}(\sqrt{3})$ & $2\sqrt{3}$ & $-3-2\sqrt{3}$\\
\hline $T_4$ & $\left(\sqrt{3+\sqrt{5}},\sqrt{6+2\sqrt{5}},\sqrt{7+3\sqrt{5}}\right)$ & $2$& $\mathbb{Q}(\sqrt{5})$ & $2+2\sqrt{5}$ & $-14-6\sqrt{5}$\\
\hline $T_5$ & $\left(\sqrt{2+(1+\sqrt{13})/2},\sqrt{16+4\sqrt{13}},\sqrt{12+ \frac{9}{2}(1+\sqrt{13})}\right)$ & $2$ & $\mathbb{Q}(\sqrt{13})$ & $(-1+\sqrt{13})/2$ & $-3(11+3\sqrt{13})$\\
\hline $T_6$ & $\left(\sqrt{3+\frac{1}{2}(1+\sqrt{5})},\sqrt{14+6\sqrt{5}},\sqrt{16+7\sqrt{5}}\right)$ & $5$ & $\mathbb{Q}(\sqrt{5})$ & $(-1+3\sqrt{5})$ & $-2(115+51\sqrt{5})$\\
\hline $T_7$ & $\left(\sqrt{3+\sqrt{3}},\sqrt{14+6\sqrt{3}},\sqrt{15+8\sqrt{3}}\right)$ & $6$ & $\mathbb{Q}(\sqrt{3})$ & $6(4+\sqrt{3})$ & $-3(31+18\sqrt{3})$\\
\hline
\end{tabular}%
}
\end{table}

For each group $\Gamma=T_i^{(2)}$, the code construction process can be made explicit.

First, we can easily find the generators of $T_i^{(2)}$, by applying Proposition \ref{sijsling} to compute the explicit matrices $\alpha$ and $\beta$. Namely, given the trace triple $(x,y,z)$, the equation $\mathrm{Tr}(\alpha)=\lambda+\lambda^{-1}=x$ will determine $\alpha$; then,
we obtain $a$ by solving $\mathrm{Tr}(\beta)=2a=y$; and finally, $\mathrm{Tr}(\alpha\beta)=z$ will give $b$, determining $\beta$. Thus, we have an explicit presentation of each group $\Gamma=T_i^{(2)}$.

The next step to construct our codes is to determine a fundamental domain for $\Gamma$ in each case. Fundamental domains for several arithmetic Fuchsian groups over $\mathbb{Q}$ can be found in \cite{alsinabayer}. For the general totally real case, it is more complicated. In the present work, we have effectively computed them with the aid of \emph{Mathematica}  by using the explicit generators of the groups computed above.

\begin{example} Let us consider $\Gamma=T_2^{(2)}$.
The generators of the group, given in Proposition \ref{prop-T2}, are obtained from:
$$
\begin{array}{l}
\alpha=\frac{1}{2}
\left(\begin{array}{cc}
\sqrt{3+\sqrt{5}}-\sqrt{-1+\sqrt{5}}\; & 0 \\
0\; & \sqrt{3+\sqrt{5}}+\sqrt{-1+\sqrt{5}}
\end{array}\right);\;
\\
\beta=\frac{1}{2}\left(\begin{array}{cc}\sqrt{3(3+\sqrt{5})}\;\; & -\sqrt{5+3\sqrt{5}}\\ &\\-\sqrt{5+3\sqrt{5}}\;\; & \sqrt{3(3+\sqrt{5})}\end{array}\right).
\end{array}
$$

A fundamental domain for $\Gamma=T_2^{(2)}$  is displayed in Figure \ref{Fdtal}. Its edges are given by the isometric circles of the following transformations:
$$
\begin{array}{l}
\alpha^2,\alpha^{-2},\beta^2,\beta^{-2},\gamma,\gamma^{-1},
\alpha^{-1}\beta\alpha\beta^{-1},\left(\alpha^{-1}\beta\alpha\beta^{-1}\right)^{-1},
\\
\alpha\beta^{-1}\alpha^{-1}\beta,\left(\alpha\beta^{-1}\alpha^{-1}\beta\right)^{-1},
\alpha^{-1}\beta^{-1}\alpha\beta,\left(\alpha^{-1}\beta^{-1}\alpha\beta\right)^{-1}.
\end{array}
$$
\end{example}

    \begin{figure}
            \begin{center}
                \scalebox{1}{
                        \includegraphics[width=0.45\textwidth]{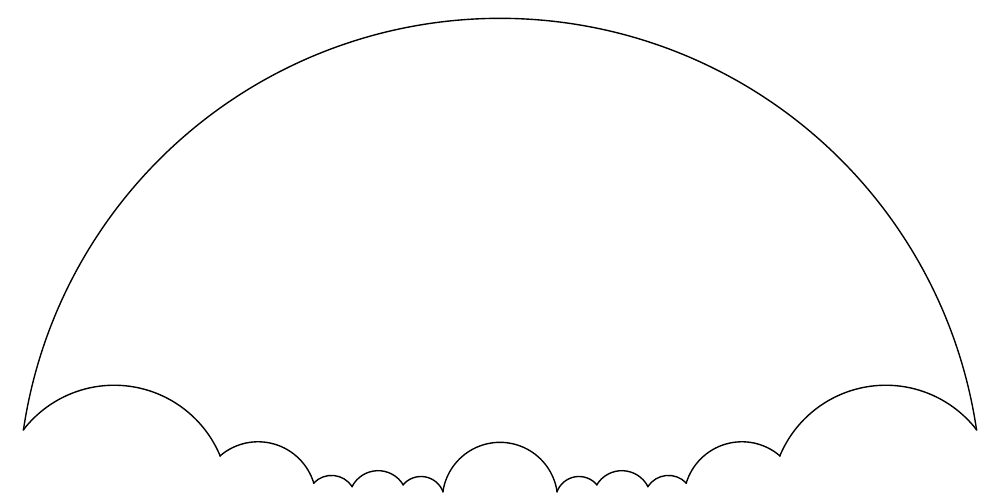}
                }
                        \caption[]{Fundamental domain for group $T_2^{(2)}$ }
                        \label{Fdtal}
        \end{center}
    \end{figure}

Now we turn the attention to the design of the codes. We fix $\tau=i$ as the center of the code, and for each code size $q=2N$, we have chosen a set of matrices $S_{\Gamma}=\{\gamma_1,...,\gamma_N\}$ with the aim of minimizing the average energy $\frac{1}{N}\sum_{i=1}^N|\gamma_i(\tau)|^2$.
We display in tables \ref{table4cons} and \ref{table16cons} the choices for the set of elements $S_{\Gamma}$ for the sample groups $\Gamma=T_i^{(2)}$, giving 4-NUF and 16-NUF constellations.

\begin{table}[!h]
\caption{Choices for the $4$-NUF codes}
\label{table4cons}
\begin{tabular}{|c|l|}
\hline Group $T_i$ & $S_{\Gamma}$, for $\Gamma=T_i^{(2)}$ \\
\hline $T_1$ & $\pm\alpha^2$, $\pm\alpha\gamma\alpha^{-1}$\\
\hline $T_2$ & $\pm\alpha^2$, $\pm\alpha\gamma\alpha^{-1}$\\
\hline $T_3$ & $\pm\alpha^2$, $\pm\alpha\gamma\alpha^{-1}$\\
\hline $T_4$ & $\pm\alpha^2$, $\pm\alpha\gamma\alpha^{-1}$\\
\hline $T_5$ & $\pm Id$, $\pm\alpha^2$\\
\hline $T_6$ & $\pm\alpha^2$, $\pm\alpha\gamma\alpha^{-1}$\\
\hline $T_7$ & $\pm\alpha^2$, $\pm\alpha\gamma\alpha^{-1}$\\
\hline
\end{tabular}
\end{table}

\begin{table}[!h]
\resizebox{0.8\textwidth}{!}{\begin{minipage}{\textwidth}
\caption{Choices for the $16$-NUF codes}
\label{table16cons}
\begin{tabular}{|c|l|}
\hline Group $T_i$ & $S_{\Gamma}$, for $\Gamma=T_i^{(2)}$ \\
\hline $T_1$ & $\pm\alpha^4$, $\pm\alpha^2\gamma$, $\pm\alpha^2\beta^2$, $\pm\alpha\gamma\alpha^{-1}\gamma$, $\alpha^3\gamma\alpha^{-1}$, $\pm\alpha^2\beta\gamma\beta^{-1}$, $\pm\alpha\gamma\beta\gamma\beta^{-1}\alpha^{-1}$, $\pm\alpha^3\beta\gamma\beta^{-1}\alpha^{-1}$\\
\hline $T_2$ & $\pm\alpha^2$, $\pm\alpha^4$, $\pm\alpha^2\gamma$, $\pm\alpha^2\beta^2$, $\pm\alpha\gamma\alpha^{-1}$, $\pm\alpha\gamma\alpha$, $\pm\alpha\gamma\alpha^{-1}\beta^2$, $\pm\alpha^3\beta\gamma\beta^{-1}\alpha^{-1}$ \\
\hline $T_3$ & $\pm\alpha^2$, $\pm\alpha^4$, $\pm\alpha^2\gamma$, $\pm\alpha^3\gamma\alpha^{-1}$, $\pm\alpha^2\beta\gamma\beta^{-1}$, $\pm\alpha\gamma\alpha^{-1}\beta^2$, $\pm\alpha\gamma\beta\alpha^{-1}\beta^{-1}$, $\pm\alpha^3\beta\gamma\beta^{-1}\alpha^{-1}$\\
\hline $T_4$ & $\pm\alpha^2$, $\pm\alpha^4$, $\pm\alpha^2\gamma$, $\pm\alpha\gamma\alpha$, $\pm\alpha^3\gamma\alpha^{-1}$, $\pm\gamma\beta\gamma\beta^{-1}$,  $\pm\alpha\gamma\alpha^{-1}\beta^2$, $\pm\alpha^3\beta\gamma\beta^{-1}\alpha^{-1}$\\
\hline $T_5$ & $\pm\alpha^2$, $\pm\alpha^4$, $\pm\alpha^2\gamma$, $\pm\alpha\gamma\alpha^{-1}$, $\pm\alpha^3\gamma\alpha^{-1}$, $\pm\alpha\gamma\alpha^{-1}\beta^2$, $\pm\alpha\gamma\alpha^{-1}\gamma$, $\pm\beta\gamma\beta^{-1}\alpha^2$\\
\hline $T_6$ & $\pm\alpha^2$, $\pm\alpha^4$, $\pm\alpha^2\gamma$, $\pm\alpha\gamma\alpha^{-1}$, $\pm\alpha^3\gamma\alpha^{-1}$, $\pm\alpha\gamma\alpha^{-1}\gamma$, $\pm\alpha^2\beta\gamma\beta^{-1}$, $\pm\alpha^3\beta\gamma\beta^{-1}\alpha^{-1}$\\
\hline $T_7$ & $\pm\alpha^2$, $\pm\alpha^4$, $\pm\alpha\gamma\alpha^{-1}$, $\pm\alpha^2\gamma$, $\pm\alpha\gamma\alpha^{-1}\gamma$, $\pm\alpha^3\gamma\alpha^{-1}$, $\pm\alpha^2\beta\gamma\beta^{-1}$, $\pm\alpha^3\beta\gamma\beta^{-1}\alpha^{-1}$\\
\hline
\end{tabular}
\end{minipage}}
\end{table}

Given the choice of the center $\tau$ and $S_{\Gamma}$, the associated Fuchsian code is $\mathcal{C} = \left\{\pm\gamma(\tau) \, \mid \,  \gamma \in S_{\Gamma}\right\}\subseteq \mathbb{C}$, taking in account the duplication to the lower half-plane. For instance,
Figure \ref{4cons} depicts a 4-NUF constellation for the group $T_2^{(2)}$.

       \begin{figure}
        \begin{center}
                \scalebox{1.35}{
                        \includegraphics[width=0.45\textwidth]{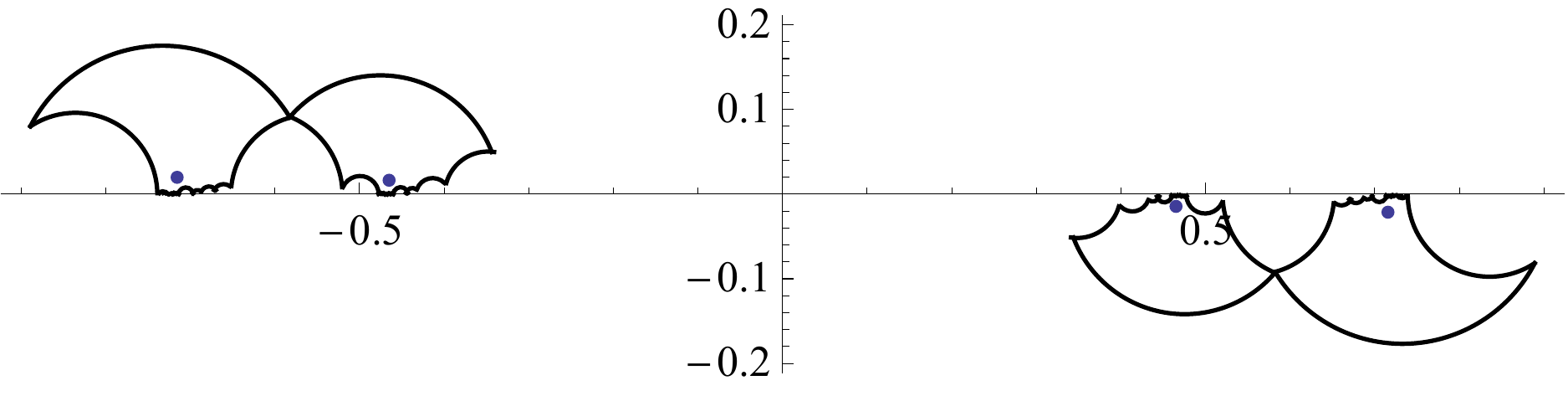}
                }
                        \caption[]{$4$-NUF constellation and tiles for group $T_2^{(2)}$}\label{4cons}
        \end{center}
    \end{figure}

\begin{rem} In Takeuchi's list, the arithmetic Fuchsian groups of signature $(1;e)$ are defined over number fields of degree $1,2,3,4,5$ and $6$. Hence, in an analogous way Fuchsian codes with data rates $3,6,9,12,15$ and $18$ can be explicitly constructed.
\end{rem}

\subsection{Numerical data}
Next, we show numerically in the case of Fuchsian codes attached to arithmetic Fuchsian groups of signature $(1;e)$, how a bigger code rate implies a bigger number of codewords with prescribed minimum distance fitting in a Euclidean ball. Hence, in this case, a higher code rate does indeed give us higher data rate without any penalty in the minimum distance or transmission power, along the same lines as for lattice codes, cf. Section 1.

Our approach is as follows: recall that the groups $\Gamma=T_2^{(2)}$ are generated by the elements $\alpha^2$, $\beta^2$, $\gamma$, $\alpha\gamma\alpha^{-1}$, $\beta\gamma\beta^{-1}$ and $\alpha\beta\gamma\beta^{-1}\alpha^{-1}$. Using \emph{Mathematica}, we have implemented an algorithm which generates a set of matrices consisting in: first, all the generators of $\Gamma$, second, all the possible matrices of the form $\gamma_1\gamma_2$, where $\gamma_1,\gamma_2$ runs over the set of generators, and then, we iterate this procedure until we have a desired number of matrices fixed beforehand. After that, our algorithm deletes the possible repeated matrices of the former stage. This way, we can construct a codebook of each desired size.

 Once our codebook is constructed, we count how many codewords are there in the unit ball and in the ball of radius $0.5$ for each of the groups $T_i$. The center of the code has been taken to be $i$. The minimum distance $d$ (computed up to an accuracy of four decimal digits), in each case, is that of the bigger codebook, i.e., the one fitting inside the unit ball. Notice that, among our choice of codes, the minimum distance corresponds to a code of rate $6$.

\begin{table}[!h]
\caption{Number of codewords inside Euclidean balls of radius $r=1,0.5$.}
\label{tablerates1}
\begin{tabular}{|c|c|c|c|c|}
\hline Group & Rate &  $r=1$    &   $r=0.5$ & $d$\\
\hline $T_1$ &  $3$   & $72$ & $36$ & $0.0001$\\
\hline $T_2$ &  $6$ & $84$   & $54$ & $0.0005$\\
\hline $T3$ & $6$ & $80$ & $42$ & $0.0002$\\
\hline $T4$ & $6$ & $80$ & $40$ & $0.0001$\\
\hline $T5$ & $6$ & $80$ & $40$ & $0.0003$\\
\hline $T6$ & $6$ & $84$ & $42$ & $0.0002$\\
\hline $T7$ & $6$ & $84$ & $42$ & $0.0002$\\
\hline
\end{tabular}%
\end{table}

\section{Conclusions}

In this paper,  we have generalized the construction of the Fuchsian codes presented in \cite{BHR} and \cite{BHAR} to the general case of Fuchsian groups over totally real fields. One of the main features of the codes in \cite{BHR,BHAR} is that they have logarithmic decoding complexity. We have shown that by adapting the point reduction algorithm introduced in \cite{bayerremon} to the present, more general case, the decoding complexity of the corresponding Fuchsian codes remains logarithmic in the codebook size, provided that we have a fundamental domain and a representation of the Fuchsian group.

In the case of Fuchsian groups $\Gamma$ associated  to quaternion algebras defined over a totally real number field, with an additional hypothesis about their ramification, we have proved that the Fuchsian code attached to $\Gamma$   has  rate at least $3n$, where  $n$ is the degree of the base field. This corresponds to at least $3n$-fold (resp. $3n/2$-fold) information compression in terms of the number of independent integers transmitted per codeword compared to the commonly used PAM (resp. QAM) alphabet.  Moreover,  we have deduced that there exist infinitely many Fuchsian codes of  rate $3n$. In particular, by considering  subfields of cyclotomic fields, we have made explicit the existence of infinitely many Fuchsian codes with  rate $3\frac{p-1}{2}$. We have explicitly constructed Fuchsian codes attached to the groups $T^{(2)}$ for arithmetic Fuchsian groups $T$ of signature $(1;e)$ classified by Takeuchi. Finally, the relevance of the code rate in terms of information compression and data rate has been numerically demonstrated.

Further research will consist of rigorously finding and proving a relation between the code rate and the data rate, and of  the construction of an error-correcting outer system for our codes. Possible enabler of this is the excess of code rate that could be alternatively utilized for error correction, e.g., by using some kind of an analogy of a parity-check method. This would make our codes more suitable  for the low-moderate SNR regime (SNR stands for the signal-to-noise ratio describing the channel quality), while at the moment the relevant application is the high-SNR regime.  One instance of this is  an optic-fiber channel.

\end{document}